\documentclass[11pt]{article}
\usepackage[T1]{fontenc}
\usepackage[utf8]{inputenc}
\usepackage{lmodern}
\usepackage[margin=1in]{geometry}
\usepackage{amsmath,amssymb,amsthm,booktabs,needspace}
\usepackage{tikz}
\usetikzlibrary{arrows.meta,positioning,calc,fit,backgrounds}
\usepackage{microtype}
\usepackage[section]{placeins}
\usepackage[colorlinks=true,linkcolor=blue!45!black,citecolor=blue!45!black,urlcolor=blue!55!black]{hyperref}
\newtheorem{theorem}{Theorem}
\newtheorem{lemma}[theorem]{Lemma}
\newtheorem{proposition}[theorem]{Proposition}
\newtheorem{conjecture}[theorem]{Conjecture}
\newcommand{\ac}{\operatorname{ac}}
\newcommand{\Max}{\mathcal F}
\newcommand{\one}{\mathbf 1}
\title{Exact Avoid-or-Compress Extrema for Dense Reachable Images:\\Defects Four and Five}
\author{Qichao Wang\\[0.15em]
{\normalsize Hebei University of Technology}\\[0.45em]
Daoyu Dong\\[0.15em]
{\normalsize University of Electronic Science and Technology of China}}
\date{October 29, 2026}
\hypersetup{pdftitle={Exact Avoid-or-Compress Extrema for Dense Reachable Images},pdfauthor={Qichao Wang and Daoyu Dong}}

\newcommand{\proofsection}[2]{\href{https://github.com/dongxuelian2/cerny-direct-attack/blob/8bd89de6bc516f8c29d1ab0f68b7e56a27f76e23/PROOF.md\##1}{#2}}
\newcommand{\source}[1]{\href{https://github.com/dongxuelian2/cerny-direct-attack/blob/d95182946a1bf3f8307bcef6452c84a412862b54/scripts/problem4/#1}{\nolinkurl{#1}}}
\begin{document}
\maketitle
\begin{abstract}
Let $S$ be a reachable image of a strongly connected synchronizing automaton
and $\varnothing\ne A\subsetneq S$. We determine the maximum shortest
avoid-or-compress length over binary and arbitrary finite alphabets for
$m=n-|S|=4$ and $m=5$. The exact values are $\min(n-|A|,11)$ and
$\min(n-|A|,14)$, respectively. Uniform binary constructions attain both
bounds. For the upper bounds, short-pair graphs and actual-image constraints
reduce all parameters to finite necessary domains. Exhaustive checks exclude
the domains, apart from a single relaxation eliminated by a multipartite
overlap lemma. The largest calculation covers 3,206,988 domains.
\end{abstract}

\section{Problem and results}
All words act on the right and length counts \emph{original alphabet letters}.
A complete deterministic automaton is a triple $\mathcal A=(Q,\Sigma,\delta)$,
where $Q$ and $\Sigma$ are finite and $\delta$ is total.
It is strongly connected if every state reaches every other state, and
synchronizing if $|Qw|=1$ for some word $w$.
Let $n=|Q|$, let $S=Qv$ be an actual reachable image with $|S|=s$,
and let $\varnothing\ne A\subsetneq S$, $|A|=r$.
The admissible parameters are $n\ge2$ and $1\le r<s\le n$.
Put
\[
 \ac(S,A)=\min\{|w|:A\nsubseteq Sw\ \text{or}\ |Sw|<s\},
 \qquad c(S)=\min\{|w|:|Sw|<s\}.
\]
Thus avoidance means omitting \emph{at least one} protected state, and
$\ac(S,A)\le c(S)$. Let $\Max(n,s,r)$ maximize $\ac(S,A)$ over
strongly connected synchronizing automata with arbitrary finite alphabets;
$\Max_{\rm bin}$ restricts to binary alphabets. Write $m=n-s$ and $q=n-r$.
The avoid-or-compress lemma~\cite{szykula2018,szykula2026} and
the Frankl--Pin compression bound~\cite{frankl1982,pin1983} give
\begin{equation}\label{eq:universal}
 \Max_{\rm bin}(n,s,r)\le\Max(n,s,r)
 \le\min\{q,\tbinom{m+2}{2}\}.
\end{equation}
A proof of the $q$ bound appears in Appendix~\ref{sec:appendix}.
Open Problem~4 of the survey~\cite{szykula2026} asks for tight bounds
depending on the set cardinalities. The two-state image case is already
illustrated there, and related fixed-state avoiding-or-compressing sharpness
is discussed in~\cite{ferens2021}. Our main contribution concerns entire
dense-image parameter families.

\Needspace{11\baselineskip}
\begin{theorem}[Dense-image extrema; computer-assisted upper bounds]\label{thm:main}
For all admissible parameters,
\[
 \boxed{\Max_{\rm bin}(n,n-4,r)=\Max(n,n-4,r)=\min(q,11)}
\]
and
\[
 \boxed{\Max_{\rm bin}(n,n-5,r)=\Max(n,n-5,r)=\min(q,14).}
\]
In every strongly connected synchronizing automaton, every $(n-4)$-subset
has $c(S)\le11$ when $n\ge11$, and every $(n-5)$-subset has $c(S)\le14$
when $n\ge16$. These pure-compression assertions do not require reachability.
\end{theorem}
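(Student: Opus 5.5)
The proof splits into a lower bound, realized by explicit binary automata, and an upper bound, which combines structural reductions with exhaustive computation. Since $\Max_{\rm bin}\le\Max$ by \eqref{eq:universal}, it suffices to show two things. First, for every admissible $(n,r)$ with $m\in\{4,5\}$ there is a binary strongly connected synchronizing automaton with a reachable image $S$, $|S|=n-m$, and $A\subsetneq S$, $|A|=r$, attaining $\ac(S,A)=\min(q,11)$, respectively $\min(q,14)$. Second, no automaton over any finite alphabet exceeds these values. The $q$ part of the upper bound is already supplied by Appendix~\ref{sec:appendix}. So the real content is the constant bounds $11$ and $14$, which are strictly below the Frankl--Pin values $\binom{6}{2}=15$ and $\binom{7}{2}=21$.

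\textbf{Lower bounds.} I would build a uniform family of binary automata, for instance a Černý-type cycle letter $a$ together with a letter $b$ that merges a single pair. In such automata the compression of an $(n-m)$-subset is forced to travel around several short cycles before a collision occurs. I would choose $S$ to be a reachable image of a short word and take $A\subseteq S$ to be the states whose removal is slowest. Then I verify directly, by tracking the complement $Q\setminus Sw$ (which has exactly $m$ elements until compression happens), that every word of length less than $\min(q,11)$ keeps $A\subseteq Sw$ and $|Sw|=s$. For small $n$, where $q$ is the binding term, the construction must be adapted so that the complement reaches $A$ only after $q$ steps. I would design one parametric family and check finitely many boundary cases by computer.

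\textbf{Upper bounds.} I would work with the complement $T=Q\setminus S$, with $|T|=m$. A word preserves $|S|$ exactly when no pair of $S$ collapses, and then $Sw$ has a well-defined complement $T_w$. Following the short-pair-graph idea from the abstract, I would record which pairs of states can be merged by letters or short words. Strong connectivity and synchronization guarantee that a merging pair is reachable from $S$ quickly. I would then prove lemmas bounding the distance from $S$ to a compression in terms of how that pair graph interacts with $T$, and also bounding how long $A$ can avoid being hit. Here I would use that $S$ is an actual image $Qv$, which forbids certain configurations: for example, every state of $S$ has a preimage pattern that is consistent with $v$. These arguments cut the space of possible counterexamples down to finitely many ``domains'', meaning abstract configurations of $T$, the pair graph, and the relevant letter actions restricted to a bounded neighborhood. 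Each domain is then checked exhaustively to have no word-free extension of length $\ge 12$ (resp.\ $\ge 15$).

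The pure-compression claims follow from the same analysis with $A$ ignored, under the hypotheses $n\ge11$ and $n\ge16$. The domain exclusion for them does not use the image constraint, which explains why they hold without reachability.

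The main obstacle is the reduction to finitely many domains. Specifically, I need to show that any long avoid-or-compress word would be witnessed inside a bounded sub-configuration whose size is independent of $n$. I would try first to bound the number of states that can influence the first $14$ steps, which is at most $m$ plus the states visited by the complement trajectory. I expect a residual case, in which several disjoint pieces of the complement overlap, that the computation cannot close. That case would need a separate combinatorial overlap argument.
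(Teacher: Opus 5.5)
Your frame is right: above, $\ac(S,A)\le\min\{q,c(S)\}$ reduces everything to $c(S)\le 11$, resp.\ $14$; below, you need binary witnesses. But both halves are missing the ideas that actually carry the proof.

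\textbf{Lower bound.} The proposal contains no witness, and the one family you suggest cannot work. Suppose a letter $a$ is a full $n$-cycle and $b$ merges some $u\ne v$. For $n\ge 2m+1$, each state lies in $\{u,v\}a^{-j}$ for at most two $j\in\{0,\dots,2m\}$. So every $m$-set misses one of these $2m+1$ pairs, and some $a^jb$ with $j\le 2m$ compresses every $(n-m)$-set. Thus $c(S)\le 2m+1<3m-1$.

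The paper reaches $3m-1$ with the two-block design of Theorem~\ref{thm:two-block}:
\begin{itemize}
\item $a$ is a permutation with a $(2t+1)$-cycle on $X\cup Y\cup\{z\}$ and one transposition in $P$;
\item $b$ is a long cycle on $P\cup Y$ into which $X$ feeds, and its only nontrivial kernel class is $\{x_t,y_t\}$;
\item $S=P\cup X$.
\end{itemize}
Moreover, checking boundary cases by computer cannot certify a lower bound for all $n$, all $r$ and all words. The paper does this uniformly with a pair potential satisfying $\Phi(E)\le1+\Phi(Ef)$ for both letters, and with the cell identity $D_\ell C=CT_\ell+f_z(e_1-e_2)$. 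It also needs the separate Appendix families for $m+1\le q\le 2m+1$.

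\textbf{Upper bound.} You correctly name the $n$-independent finite reduction as the crux, but you give no mechanism, and the one you sketch fails. A state $y$ is omitted by $(Sw)f$ iff none of its $f$-preimages lies in $Sw$, so $Q\setminus Sw$ has no ``trajectory''. Moreover, the images $Sw$ themselves sweep out all of $Q$ (Lemma~\ref{lem:short-images}).

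What is needed is the following.
\begin{enumerate}
\item A set survives depth $h$ iff it is independent in $P_h$, so complements of surviving images are vertex covers of size $m$.
\item The lower-codimension bounds $C_2\le6$ and $C_3\le9$, and for $m=5$ the already-proved $C_4\le11$ (giving $\alpha(P_{11})=n-5$). These pin $\tau(P_7)\in\{3,4\}$ and $\tau(P_9)\in\{4,5\}$.
\item Core localization: omitted states lie in the nonisolated core $K_h$, plus at most one named isolate per row when $\tau=m-1$. The unbounded isolated reservoir is retained only through letter signatures with cardinality constraints.
\item Lemma~\ref{lem:short-images}: its short images give covering rows which, with the supplement's degree, edge and row-rank bounds, bound the named support.
\end{enumerate}

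The ``actual-image'' constraint is this statement about images of $S$, not $S=Qv$. A reduction using $S=Qv$, as you propose, would not give the theorem's bounds for arbitrary $(n-m)$-subsets. So your later claim that the exclusion ignores the image constraint is unsupported.

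Finally, the residual case is a specific $K_{4,4}$ relaxation, excluded by Lemma~\ref{lem:multipartite}. Covering each part by intersecting rows independent in $P_{h+1}$ forces every letter to send $P_h$-nonedges to nonedges, which contradicts synchronization.
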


The lower bounds are established by explicit binary automata. For the
upper bounds, Sections~\ref{sec:method} and~\ref{sec:computation} present
the short-pair reductions, exhaustive finite-domain coverage, and
computational basis. The mathematical supplement is
\texttt{PROOF.md}~\cite{repo}. The universal $q$ bound handles the
parameters below the pure-compression thresholds.
The codimension-three bound $C_3\le9$, used in both upper proofs,
has a paper proof in the supplement.
The same two extrema hold if $S$ is allowed to be an arbitrary
$(n-m)$-subset: neither the universal $q$ bound nor the upper proofs
requires reachability, while all matching lower witnesses are reachable.

\begin{proposition}[Binary sharpness below the two-block threshold]\label{prop:smallq}
Let $m\ge4$ and $m+1\le q\le2m+1$. For every admissible $r$ there is a binary
strongly connected synchronizing automaton with a reachable $(n-m)$-image
$S$ and an $r$-set $A\subsetneq S$ such that $\ac(S,A)=q$.
\end{proposition}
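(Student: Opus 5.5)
The plan is to give one explicit two-letter family, indexed by $(m,q,r)$, and prove $\ac(S,A)\ge q$ for it. The matching upper bound $\ac(S,A)\le q$ is the universal bound~\eqref{eq:universal}, so it is enough to show that no word of length $q-1$ avoids $A$ or compresses $S$.

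\textbf{Construction.} Take $n=q+r$ and states $\{0,1,\dots,n-1\}$ on a cycle. Let $a$ be the rotation $i\mapsto i+1 \pmod n$, which makes the automaton strongly connected. Let $b$ be the identity except on a small window near $0$, where it merges one pair. This single merge makes the automaton synchronizing by the standard Černý-type argument. Put $A=\{q,\dots,n-1\}$, a block of length $r$. Write $S=Q\setminus H$, where $H$ is a set of $m$ holes inside $[0,q)$, placed as two blocks. This is the reason for the hypothesis $q\le2m+1$: the $q-m\le m+1$ unprotected states of $S$ must be interleaved with the two hole blocks. The interleaving is chosen so that no hole and no mergeable pair sits in a cheap position.

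\textbf{Reachability.} I will show that $S$ is the image $Qv$ of an explicit word $v$ built from $b$ and powers of $a$. Each application of $b$ at the merge window creates one hole, and rotations then move that hole into its slot. Spacing the two blocks appropriately makes this routine.

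\textbf{Lower bound via a potential.} For a set $T$ with $|T|=s$, define
\[
\Phi(T)=\min\Bigl\{\min_{h\notin T} d(h,A),\; \min_{\text{merge configurations}}\text{cost}\Bigr\},
\]
where $d(h,A)$ is the number of forward steps needed for hole $h$ to enter $A$. The second term is the number of letters needed before $T$ contains the merged pair, with $b$ applied. Then:
\begin{itemize}
\item[(i)] Under $a$, every hole advances by one step, so $\Phi$ drops by at most $1$.
\item[(ii)] Under $b$, hole positions change only inside the merge window. There a hole can move backward by one step, but cannot jump toward $A$, so $\Phi$ again drops by at most $1$.
\item[(iii)] We have $\Phi(S)=q$. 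The farthest-advanced hole lies at distance $q-\max H$ from $A$ in the forward direction. The placement of the two blocks forces the merge cost to be at least $q$ as well: within the first $q-1$ letters, a merge would require a hole-free pair at the window while $b$ is read, and the interleaving rules this out.
\end{itemize}
Together, (i)--(iii) give $\ac(S,A)\ge q$.

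I expect step (iii) to be the main obstacle, specifically the compression side for every word of length below $q$. The Černý-type merge $0\mapsto1$ is too cheap when holes are contiguous, since it compresses after about three letters. The first attempt will therefore place the merge pair so that it is flanked by holes that rotation can clear only after about $q-m$ steps. I will then check by a case analysis on the number of $b$'s, using the fact that each $b$ either does nothing or relocates a single hole, that the two hole blocks never simultaneously clear both the window and $A$'s boundary early. If this uniform placement fails for the extreme value $q=2m+1$, I will treat $q=2m+1$ separately, merging one of the hole blocks with $A$'s boundary.

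Finally, I will confirm small cases such as $m=4$, $q\in\{5,\dots,9\}$ by brute force before writing the general argument.
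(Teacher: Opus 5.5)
Your construction cannot work, and it fails on the avoidance side before any compression analysis is needed. Since $a$ rotates all of $Q$, the word $a^k$ sends $S=Q\setminus H$ to $Q\setminus(H+k)$. Take $k=q-\max H$. Then the hole $\max H$ lands on $q\in A$, so
\[
\ac(S,A)\le q-\max H\le q-m+1<q,
\]
because $m$ distinct holes in $[0,q)$ force $\max H\ge m-1$. Your own definition gives $\Phi(S)\le q-\max H$, so the claim $\Phi(S)=q$ in (iii) is inconsistent with it.

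Repositioning $A$ or the two hole blocks does not help. Suppose no power $a^k$ with $k<q$ carries a hole $h$ into $A$. Then the $q$ consecutive states $h,h+1,\dots,h+q-1$ all lie outside $A$, and since $|Q\setminus A|=q$ they form all of $Q\setminus A$. Because $q<n$, such an arc has a unique starting point, so there is at most one hole. Hence any design in which a single letter cyclically permutes $A$ together with the holes fails for every $m\ge2$, whatever $b$ does. Your reading of $q\le 2m+1$ as an interleaving constraint on two hole blocks therefore does not correspond to a mechanism that can produce length $q$.

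The missing idea is that $A$ must be shielded from the long cycle. The paper splits the range into three cases.
\begin{itemize}
\item $q=m+1$ is covered by the first cardinality face (Proposition~\ref{prop:one}).
\item $q=m+2$ is covered by the second cardinality face (Proposition~\ref{prop:two}).
\item $m+3\le q\le 2m+1$ is covered by Proposition~\ref{prop:gaps} with $g=q-m-1$ and $M=q-1$. Here the hypothesis $q\le 2m+1$ is exactly the condition $M\ge 2g$.
\end{itemize}

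In Proposition~\ref{prop:gaps}, $a$ fixes $P\supseteq A$ and rotates only the unprotected states $U$. The letter $b$ is the only entrance to $P$, and it feeds states into $A$ along one descending path. Safety for every word is then proved by the cell bookkeeping
\[
D_bC=CT_b+f_{u_M}(e_g-e_1),
\]
not by a case analysis on the number of $b$'s. Each deviation from the permutation $T_w$ is a rank-one correction. Such a correction arises only once a trajectory reaches $u_M$, which takes at least $M-g$ letters. It then needs at least $g$ further letters to reach $A$, giving the bound $M+1=q$.
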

\begin{proof}
Appendix~\ref{sec:appendix} supplies all three required constructions,
including their reaching and reset words.
The first cardinality face covers $q=m+1$; the second covers $q=m+2$.
For $m+3\le q\le2m+1$, take $g=q-m-1$ and $M=q-1$ in
Proposition~\ref{prop:gaps}. Then $g\ge2$, $M\ge2g$,
$n=r+M+1$, $s=r+g+1=n-m$, and $\ac(S,A)=M+1=q$.
\end{proof}
\section{The binary witness}\label{sec:diagram}
Figure~\ref{fig:witness} gives the two original-letter actions for a
fourteen-state instance, including the initial set and unique merger.
The potential and cell arguments below, not the displayed attaining words
alone, establish the lower bound for every word.

\begin{figure}[!htbp]
\centering
\begin{tikzpicture}[>=Latex,font=\small,
  unit/.style={draw,rounded corners=2pt,inner xsep=7pt,inner ysep=5pt},
  occupied/.style={unit,fill=blue!12},
  flow/.style={-Latex,semithick}]
\node[anchor=west,font=\bfseries] at (0,4.85) {(a) Letter $a$};
\node[occupied] at (3.1,4.05) {$\boxed{p_0},p_1,\ldots,p_4$ fixed};
\node[occupied] at (10.2,4.05) {$p_5\longleftrightarrow p_6$};
\node[occupied] (xa) at (2.2,2.95) {$x_1\to x_2\to x_3$};
\node[unit] (ya) at (7.3,2.95) {$y_1\to y_2\to y_3$};
\node[unit] (za) at (11.5,2.95) {$z$};
\draw[flow] (xa.east)--node[above,font=\scriptsize]{$x_3\mapsto y_1$}(ya.west);
\draw[flow] (ya.east)--node[above,font=\scriptsize]{$y_3\mapsto z$}(za.west);
\draw[flow] (za.south)--(11.5,2.20)--node[below,font=\scriptsize]{$z\mapsto x_1$}(2.2,2.20)--(xa.south);
\draw[gray!40] (0,1.75)--(13.5,1.75);
\node[anchor=west,font=\bfseries] at (0,1.30) {(b) Letter $b$};
\node[occupied,circle,double,inner sep=3pt] (merge) at (1.2,-0.15) {$p_6$};
\node[occupied] (pb) at (4.7,-0.15) {$p_5\to\cdots\to\boxed{p_0}$};
\node[unit] (yb) at (10.3,-0.15) {$y_1\to y_2\to y_3$};
\node[occupied] (xb) at (4.7,-1.45) {$x_1\to x_2\to x_3$};
\node[unit] (zb) at (10.3,-1.45) {$z$};
\draw[flow] (merge.east)--(pb.west);
\draw[flow] (pb.east)--node[above,font=\scriptsize]{$p_0\mapsto y_1$}(yb.west);
\draw[flow] (yb.north)--(10.3,0.65)--node[above,font=\scriptsize]{$y_3\mapsto p_6$}(1.2,0.65)--(merge.north);
\draw[flow] (xb.west)-|node[pos=0.3,below,font=\scriptsize]{$x_3\mapsto p_6$}(merge.south);
\draw[flow] (zb) edge[loop right,looseness=5] node[right,font=\scriptsize]{$b$}(zb);
\node[font=\footnotesize,anchor=west] at (0,-2.12)
  {Blue: initially occupied $S=P\cup X$.\quad Boxed $p_0$: protected.\quad Double circle: merger target.};
\end{tikzpicture}
\caption{The two original-letter actions for $t=3$, $p=7$, $r=1$
($n=14$). Both incoming arrows at $p_6$ are shown:
the only nontrivial kernel of $b$ is $\{x_3,y_3\}\mapsto p_6$.
Letter $a$ is a permutation. The words $a^6ba^3b$ and $a^6bab^5$
attain compression in $11$ letters and avoidance of $p_0$ in $13$ letters,
respectively; optimality follows from the all-word arguments below.}
\label{fig:witness}
\end{figure}

\subsection{Uniform construction and proof}
\begin{theorem}\label{thm:two-block}
Put $m=n-s\ge3$ and $q=n-r\ge2m+2$.
There is a binary strongly connected synchronizing automaton with reachable
$S$ attaining $\ac(S,A)=\min\{q,3m-1\}$.
Consequently $\Max_{\rm bin}=\Max=q$ whenever $2m+2\le q\le3m-1$.
For larger $q$ the construction gives only a lower bound.
\end{theorem}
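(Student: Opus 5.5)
The plan is to generalize the witness of Figure~\ref{fig:witness} and then prove the upper and lower estimates on $\ac$ separately. Put $t=m-1\ge2$, and let $p=n-2t-1$ be the number of $P$-states. The condition $q\ge 2m+2$ should guarantee $p\ge r+2$, and I will check this first.

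\textbf{The automaton.} Take
\[
Q=P\cup X\cup Y\cup\{z\},\quad P=\{p_0,\dots,p_{p-1}\},\quad X=\{x_1,\dots,x_t\},\quad Y=\{y_1,\dots,y_t\}.
\]
\begin{itemize}
\item Letter $a$ fixes $p_0,\dots,p_{p-3}$, swaps $p_{p-2}\leftrightarrow p_{p-1}$, and cycles $x_1\to\cdots\to x_t\to y_1\to\cdots\to y_t\to z\to x_1$.
\item Letter $b$ maps $p_{p-1}\to p_{p-2}\to\cdots\to p_0\to y_1\to\cdots\to y_t\to p_{p-1}$, maps $x_1\to\cdots\to x_t\to p_{p-1}$, and fixes $z$.
\end{itemize}
Then $S=P\cup X$ has $s=n-m$.

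The protected set $A$ is an $r$-set at the bottom of the $b$-chain, chosen so that leaving $A$ costs exactly $q$ letters. I expect this to be $A=\{p_0,\dots,p_{r-1}\}$, possibly shifted, and I would pick the shift so that the avoidance length is $p-r+2t=q$; the example confirms $13=q$ for $r=1$.

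\textbf{Routine structural facts.}
\begin{itemize}
\item Strong connectivity follows because the $a$-cycle and the $b$-cycle together cover $Q$ and share $Y$.
\item Synchronization follows because $b$ has the unique nontrivial kernel $\{x_t,y_t\}$, and repeated $a$-rotations bring any two tokens into that kernel configuration, so a standard pair-merging argument applies.
\item Reachability of $S$ needs an explicit word $v$ with $Qv=P\cup X$. Since $|Q|-|S|=m=t+1$, such a $v$ must perform $t+1$ merges, each pushing a token out of $Y\cup\{z\}$. I would build $v$ by repeatedly using $b$ to merge $x_t$ with $y_t$, and $a$-powers to refill $X$ and empty $Y$. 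I would then verify that $z$ ends up unoccupied, since $z$ is the only $b$-fixed state off the chains.
\end{itemize}

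\textbf{Upper estimates.} Exhibit the words generalizing $a^6ba^3b$ (compression in $3t+2=3m-1$ letters) and $a^6bab^5$ (avoidance in $q$ letters). This gives $\ac\le\min\{q,3m-1\}$.

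\textbf{Lower bound for compression (main step).} The only merge is $b$ applied when both $x_t$ and $y_t$ are occupied. Define a potential on occupied subsets: the minimum number of letters until some pair of tokens sits at $\{x_t,y_t\}$. Initially $Y\cup\{z\}$ is empty.
\begin{itemize}
\item A token reaches $Y$ only through $x_t\xrightarrow{a}y_1$ or $p_0\xrightarrow{b}y_1$.
\item Inside $Y$, a token advances one step per letter, with both letters agreeing on $Y$.
\item Meanwhile a second token must be at $x_t$. But $b$ flushes $X$ into $p_{p-1}$, and $a$ moves $x_t$ out into $Y$.
\end{itemize}
A case analysis on which of the two entry routes feeds $Y$ should show that at least $t+1$ letters are spent to seed $Y$, $t$ letters to advance to $y_t$, and roughly $t$ more letters to realign a token at $x_t$. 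Making this count rigorous is the main obstacle. I would first try a cell decomposition: encode each reachable occupied set by the positions of its tokens within the $a$-cycle, and prove that each letter decreases a suitably weighted distance-to-merge by at most one, starting from value $3t+2$.

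\textbf{Lower bound for avoidance.} Tokens in $P$ are only ever shifted down the $b$-chain, or swapped at the top by $a$. The protected block can lose a state only when its lowest token leaves through $p_0\to y_1$. I expect the $P$-part of the occupied set to stay an interval, possibly refilled from $X$ via $x_t\to p_{p-1}$. Because of that refilling, uncovering a protected position requires pushing all $p-r$ states above $A$ through, plus the delays inflicted by refilling. A monotone potential counting occupied $P$-positions above the first gap should then give $\ge q$.

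\textbf{Conclusion.} The two lower bounds give $\ac=\min\{q,3m-1\}$. For $2m+2\le q\le3m-1$ this equals $q$, which matches the universal bound \eqref{eq:universal} and yields $\Max_{\rm bin}=\Max=q$.
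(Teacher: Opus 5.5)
Your automaton, and (once you fix $A=\{p_0,\dots,p_{r-1}\}$) your $S$ and $A$, are exactly the paper's, and your strong-connectivity remark is fine. There is a genuine gap, however: the two lower bounds, which are the whole content of the theorem, are not proved.

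\textbf{Compression.} The ``potential'' you define (the least number of letters until two tokens sit on $\{x_t,y_t\}$) is just the quantity to be bounded. The additive count ``seed $Y$, advance, realign'' also does not reflect how merging happens. In the optimal word $a^{2t}ba^tb$:
\begin{itemize}
\item the prefix $a^{2t}$ moves the $x_1$-token backward to $z$;
\item the single $b$ advances the other $X$-tokens while fixing $z$, and this is what breaks the rigid $a$-rotation;
\item $a^t$ then carries the tokens at $z$ and $x_t$ simultaneously onto $x_t$ and $y_t$.
\end{itemize}
Your final sentence points in the right direction, but the weights are the whole difficulty. What is missing is an explicit pair potential $\Phi$ with $\Phi(E)\le1+\Phi(Ef)$ for both letters, equal to $3t+2$ on every pair inside $S$ and to $1$ on $\{x_t,y_t\}$. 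The paper's table has nine pair types and uses $\rho_i=\min\{i,p-2\}$ to absorb the swap at the top of $P$. It checks each boundary crossing of the cycles, and telescoping along a merging word then gives the bound.

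\textbf{Avoidance.} The structure you intend to use is false. After $a^{2t}b$ the $P$-part is $\{p_0,\dots,p_{p-2}\}$, and the next $a$ turns it into $P\setminus\{p_{p-2}\}$, which is not an interval. This is exactly the start of the optimal avoiding word $a^{2t}bab^{p-r-1}$, of length $p-r+2t+1=q$; your $p-r+2t$ is off by one.

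The mechanism is also not the departure of the lowest token through $p_0\to y_1$. Since $a$ fixes $A$ pointwise and $p_i$ ($i\le p-2$) has $p_{i+1}$ as its only $b$-preimage, $A$ loses a state only in one way. A hole must first be created at $p_{p-1}$ by a $b$ applied while $x_t$ and $y_t$ are both empty, and must then descend $p-r$ steps. The real task is to show that no such hole can be created by any of the first $2t$ letters, even though tokens recirculate through $p_0\to y_1$ and $x_t,y_t\to p_{p-1}$.

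The paper does this with singleton $P$-cells and color cells $\{x_i,y_i\}$, with $z$ in color $1$. Both letters act on cells as permutations up to the rank-one error $f_z(e_1-e_2)$, so $\chi_SD_wC=\mathbf{1}+\sum_j\mu_j(e_1-e_2)T_{w_j}$. No correction occurs before position $t+2$, and a negative correction needs $t-1+p-r$ further letters to reach $A$.

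\textbf{Synchronization and reachability.} Your synchronization sketch fails as stated: $a$ fixes $p_0,\dots,p_{p-3}$, so no $a$-rotation brings such a pair into $\{x_t,y_t\}$. Reachability is left unconstructed. The paper reaches $S$ with $b^t$ (giving $P\cup Y\cup\{z\}$), then $e=(a^{2t}b^{p+t})^t$ (fixing $P\cup Y$ and sending $z$ to $y_1$), then $a^{t+1}$. It synchronizes using the full symmetric group on $P\cup Y$ generated by $b$ and $a^{2t+1}$, together with the compressing word $abe$.
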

\begin{proof}
Write $t=m-1\ge2$ and $p=n-2t-1\ge r+3$.
Take $P=\{p_0,\ldots,p_{p-1}\}$, $X=\{x_1,\ldots,x_t\}$,
$Y=\{y_1,\ldots,y_t\}$ and one further state $z$.
Letter $a$ interchanges the last two $P$-states, fixes the others, and
cycles $x_1,\ldots,x_t,y_1,\ldots,y_t,z$.
Letter $b$ sends $p_0$ to $y_1$, $p_i$ to $p_{i-1}$ for $i\ge1$,
increments indices in $X,Y$ below $t$, sends both $x_t,y_t$ to $p_{p-1}$,
and fixes $z$. Put $S=P\cup X$, $A=\{p_0,\ldots,p_{r-1}\}$ and $L=3t+2$.

\emph{No compression before $L$.}
For $\rho_i=\min\{i,p-2\}$ define a symmetric potential $\Phi$ on unordered
pairs, zero on singleton pairs, by the following table. Indices in $X,Y$
run from $1$ to $t$, and indices in $P$ from $0$ to $p-1$.
\begin{center}\small
\begin{tabular}{ll}
pair&potential\\ \hline
$P$--$P$, $P$--$X$, $X$--$X$ (distinct)&$L$\\
$\{x_i,y_j\}$, $j=i$&$t-i+1$\\
$\{x_i,y_j\}$, $j>i$&$2t-i+1$\\
$\{x_i,y_j\}$, $j<i$&$L-j$\\
$\{x_i,z\}$&$2t-i+1$\\
$\{y_i,y_j\}$, $i<j$&$L-j$\\
$\{y_i,z\}$&$2t+1$\\
$\{p_i,y_j\}$&$\min\{L,\max\{L-j,2t+3+\rho_i\}\}$\\
$\{p_i,z\}$&$\min\{L,2t+2+\rho_i\}$
\end{tabular}
\end{center}
Substitution in the two letter actions gives
\[
 \Phi(E)\le1+\Phi(Ea),\qquad \Phi(E)\le1+\Phi(Eb).
\]
Here are the boundary checks. The last-two-$P$ interchange preserves $\rho$;
a downward $P$-step decreases it by at most one.
For $P$--$Y$ pairs the $a$-boundary $y_t\to z$ decreases the second term
from $2t+3+\rho$ to $2t+2+\rho$.
The $b$-boundary $p_0\to y_1$ with $j<t$ has source $L-j$
and target $L-j-1$, since $j\le t-1$; with $j=t$ its target is at least $L-1$.
For $X$--$Y$, crossing $x_t$ under $a$ gives $Y$--$Y$ with potential $L-j-1$
when $j<i$; crossing under $b$ gives $P$--$Y$ of at least that potential.
The kernel pair $\{x_t,y_t\}$ has potential one.
Crossing $y_t$ in a $Y$--$Y$ pair under $a$ gives $Y$--$z$ of
potential $2t+1=L-t-1$.
For $X$--$z$, the $a$-boundary is $\{x_t,z\}\to\{y_1,x_1\}$,
with values $t+1,t$; for $Y$--$z$, every $a$-image has potential $2t$.
All remaining cases either retain potential, decrement an index by one,
or enter a type whose potential is at least the source minus one.
Clipping at $L$ cannot increase a positive difference.
Thus the inequalities hold for every index, not just tested sizes.
Every distinct pair in $S$ has potential $L$.
Telescoping along a merging word proves its length is at least $L$;
every compression merges an initial pair, giving the claimed lower bound.

\emph{No avoidance before $q$.}
Use singleton $P$-cells and $t$ color cells, giving $x_i,y_i$ color $i$
and $z$ color $1$. Let $T_a$ swap the last two $P$-cells and cycle the colors,
and let $T_b$ cycle $p_0,1,\ldots,t,p_{p-1},\ldots,p_1$.
For the incidence matrix $C$ both letters satisfy
\[
 D_\ell C=CT_\ell+f_z(e_1-e_2).
\]
Here $D_\ell$ is the original-letter transition matrix, $f_z$ the $z$-column, and $e_i$ the color row; the displayed discrepancy is a rank-one outer product.
Since $S$ has one state in every cell, induction along each original word $w$ gives
$\chi_S D_w C=\mathbf1+\sum_j\mu_j(e_1-e_2)T_{w_j}$,
where $\mu_j\ge0$ counts trajectories at $z$ immediately before the corresponding suffix $w_j$.
A trajectory from $X$ needs at least $t+1$ steps to reach $z$;
any entrance from $P$ starts at $y_1$ after at least one step and then needs
at least $t$ more. Thus the first nonzero correction has position at least $t+2$.
In the prescribed cell graph, color $2$ needs $t-1$ letters to enter $P$
at $p_{p-1}$ and at least $p-r$ more to reach $A$.
The only interchange is between adjacent last $P$-points and cannot shorten
this descending path. A negative correction therefore reaches $A$ only at
length at least $(t+2)+(t-1+p-r)=p+2t+1-r=q$.
Before then every protected coordinate has multiplicity at least one.

\emph{Attainment.}
After $a^{2t}b$, the missing $P$-point is $p_{p-1}$; after a further $a$
it is $p_{p-2}$. This is the unique $b^{p-r-1}$-preimage of $p_{r-1}$.
Hence $a^{2t}bab^{p-r-1}$ omits that protected point at length $q$.
The $X\cup Y\cup\{z\}$ portion after $a^{2t}b$ is
$\{x_2,\ldots,x_t,y_1,z\}$. After $a^t$ it contains $x_t$ and $y_t$,
so $a^{2t}ba^tb$ compresses at length $L$.
The existence of $x_t$ in the preceding set uses $t\ge2$.
Combining both upper witnesses with the lower bounds gives $\ac(S,A)=\min\{q,L\}$.

\emph{Structural requirements.}
Put $B=P\cup Y$, $h=p+t$ and $E=B\cup\{z\}$.
Letter $b$ cycles $B$ in the order $p_0,y_1,\ldots,y_t,p_{p-1},\ldots,p_1$,
and $Qb^t=E$. The $a$-cycle connects all other states to $B$, proving
strong connectivity.
On $E$, $d=a^{2t}b^h$ fixes $P$, cycles $y_1\to y_t\to\cdots\to y_1$,
and sends $z$ to $y_t$. Thus $e=d^t$ fixes $B$ and sends $z$ to $y_1$.
Since $Ba^{t+1}=S$, the original-letter word $b^tea^{t+1}$ reaches $S$.
The word $a^{2t+1}$ fixes $X\cup Y\cup\{z\}$ and transposes the last two
$P$-points, adjacent on the $b$-cycle. These words generate every permutation
of $B$, with inverses realized by positive powers.
Moreover $Ba=(B\setminus\{y_1\})\cup\{z\}$,
$Bab=(B\setminus\{y_2\})\cup\{z\}$, and $Babe=B\setminus\{y_2\}$.
Permutations route any pair in a current image into the kernel of $abe$;
repeated compression synchronizes $B$. Prepending $b^te$ synchronizes $Q$.
For $q\le L$, the construction and the universal bound $\ac(S,A)\le q$ determine both maxima.
\end{proof}

\section{Short-pair graphs and the upper bounds}\label{sec:method}
Let $P_h$ join distinct states if some original-letter word of length at most
$h$ merges them. Write $\tau(P_h)$ for its minimum vertex-cover size.
A set survives compression through depth $h$ exactly when it is independent
in $P_h$. Thus
\begin{equation}\label{eq:cover}
 \exists T,\ |T|=n-m,\ c(T)>h
 \quad\Longleftrightarrow\quad \tau(P_h)\le m.
\end{equation}
This is the short-pair viewpoint on subset compression; see also the
computational compression framework in~\cite{kisielewicz2016}.
Let $K_h=\{v\in Q:\deg_{P_h}(v)>0\}$ be the nonisolated core and
$J_h=Q\setminus K_h$ the isolated-state reservoir.
If $T$ is an independent $(n-m)$-set, its complement $C=Q\setminus T$
is a vertex cover of size $m$. As isolated vertices cover no edges,
$C\cap K_h$ is also a vertex cover. Consequently
\begin{equation}\label{eq:core-localization}
 |(Q\setminus T)\cap J_h|\le m-\tau(P_h).
\end{equation}
In particular, when $\tau(P_h)=m$, every such $T$ contains all of $J_h$:
\[
 J_h\subseteq T,\qquad Q\setminus T\subseteq K_h.
\]
Thus, independently of how large $n$ is, all $m$ omitted states lie in
the nonisolated core. Figure~\ref{fig:short-pair-cover} depicts this
finite named-state reduction, including the exceptional isolates in the
lower-cover branches.

\begin{figure}[!htbp]
\centering
\begin{tikzpicture}[>=Latex,font=\small,
  panel/.style={draw,rounded corners=2pt,align=center,inner sep=7pt,minimum height=14mm},
  arr/.style={-Latex,semithick}]
\node[panel,fill=blue!9,text width=35mm] (core) at (1.9,3.3)
  {Early core $K_h$\\[3pt]nonisolated states};
\node[panel,fill=orange!14,text width=40mm] (exception) at (6.5,3.3)
  {Exceptional isolates $E$\\[3pt]named $e_1,e_2,e_3,\ldots$};
\node[panel,dashed,text width=37mm] (common) at (11.5,3.3)
  {Common reservoir $J$\\[3pt]in every later row};
\draw[semithick] (core.south)--(1.9,2.4)--(6.5,2.4)--(exception.south);
\node[font=\footnotesize] at (4.2,2.05) {$W=K_h\cup E$: finite named support};
\node[font=\footnotesize] at (11.5,2.05) {$|J|$ may grow with $n$};
\node at (6.6,0.65) {$\begin{array}{c|c|ccc|c}
 &K_h&e_1&e_2&e_3&j_1\ j_2\ \cdots\\ \hline
 T_1&\boldsymbol b_1&0&1&1&1\quad1\quad\cdots\\
 T_2&\boldsymbol b_2&1&0&1&1\quad1\quad\cdots\\
 T_3&\boldsymbol b_3&1&1&0&1\quad1\quad\cdots\\
 \vdots&\vdots&\vdots&\vdots&\vdots&\vdots
 \end{array}$};
\node[font=\footnotesize] at (6.6,-0.65)
  {Schematic row-incidence columns; a named state has the same identity in every row.};
\end{tikzpicture}
\caption{The finite reduction retains more than the early graph.
For the complete maximum-independent-set family $\mathcal D$ of $P_H$, set
$E=\bigcup_{T\in\mathcal D}(J_h\setminus T)$ and $J=J_h\setminus E$.
The displayed pattern illustrates the at-most-one exceptional omission
per row when $\tau(P_h)=m-1$; it is not a classified domain.
The support reductions give $|W|\le10$ for $(m,h,H)=(4,7,9)$ and
$|W|\le11$ for $(5,9,11)$ in their lower-cover branches.
Identical membership columns do not merge automaton states: finite letter
signatures must retain reservoir cardinality constraints and consistent
same-letter identities, including in square tests.}
\label{fig:short-pair-cover}
\end{figure}

If $\tau(P_h)=m-1$, equation~\eqref{eq:core-localization} allows at most
one omitted isolated state in each row. The lower-cover branches
retain these exceptional omissions by name, alongside the nonisolated
early core; only the common, never-omitted isolated states form the
unbounded reservoir. The short-image and row-rank reductions in
\texttt{PROOF.md} bound this entire named support independently of $n$.

\begin{lemma}[Safety and union of actual short images]\label{lem:short-images}
Suppose $|S|=n-m\ge2$, $c(S)\ge L$, and $0\le d\le L-2$.
Every $Sw$ with $|w|\le d$ has size $n-m$ and is independent in $P_{L-1-d}$.
In a strongly connected automaton their union has at least
$\min(n,n-m+d)$ states.
\end{lemma}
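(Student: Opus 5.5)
The lemma has two parts, and I will prove them separately.

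\emph{Safety and independence.} Fix $w$ with $|w|\le d\le L-2$. First, $|Sw|=n-m$: otherwise $|Sw|<|S|$ with $|w|\le d<L\le c(S)$, which contradicts the definition of $c(S)$. For independence, suppose two distinct states $u,u'\in Sw$ are joined in $P_{L-1-d}$. Then some word $x$ with $|x|\le L-1-d$ has $ux=u'x$. Since $|Sw|=|S|$, the map $S\to Sw$ is a bijection, so $u,u'$ have distinct preimages in $S$. The word $wx$ therefore merges two states of $S$, and $|wx|\le d+L-1-d=L-1<c(S)$, a contradiction. So every short image is an $(n-m)$-set independent in $P_{L-1-d}$.

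\emph{Union growth.} Let $U_k=\bigcup_{|w|\le k}Sw$, so $U_0=S$ and $U_{k+1}=U_k\cup\bigcup_{\sigma\in\Sigma}U_k\sigma$. I will show that $|U_{k+1}|\ge\min(n,|U_k|+1)$ for every $k<d$; induction then gives $|U_d|\ge\min(n,n-m+d)$. Suppose instead that $U_{k+1}=U_k\ne Q$. Then $U_k\sigma\subseteq U_k$ for every letter $\sigma$, so $U_k$ is a nonempty proper subset closed under all letters. This contradicts strong connectivity, since a state outside $U_k$ is reachable from a state inside it.

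Hence each length level adds at least one new state until the union becomes all of $Q$, and the bound follows. Note that the union step uses only $|w|\le d$ and does not need the independence part. The only delicate point is the bijection $S\to Sw$ in the first part, which is exactly what the safety assertion supplies.
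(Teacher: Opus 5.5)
Your proof is correct and follows essentially the same approach as the paper: a compressing prefix, or a merging suffix of length at most $L-1-d$, would compress $S$ before length $L$, and strong connectivity makes each proper $U_k$ grow by at least one state. You also spell out the step the paper leaves implicit, using the bijection $S\to Sw$ to lift a merged pair of $Sw$ to a pair of $S$ merged by $wx$.
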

\begin{proof}
Compression in the prefix $w$, or a merging suffix of length at most
$L-1-d$, would compress $S$ before length $L$.
Let $U_i=\bigcup_{|w|\le i}Sw$. If $U_i$ is proper, strong connectivity
gives an edge from $U_i$ to its complement, whence $|U_{i+1}|\ge|U_i|+1$.
Starting at $|U_0|=n-m$ proves the union bound.
\end{proof}

For an early layer $G=P_h$ and a later layer $P_H$, the reductions retain
the \emph{complete} later maximum-independent-set family $\mathcal D$.
Every nonisolated early-core state and every early isolated state omitted
by some row is named consistently across all rows.
The other isolated states lie in the common reservoir. Finite letter
signatures record their possible entrances and exits.
The structural support bounds and row-rank restrictions in the supplement
then reduce arbitrary $n$ to finitely many named domains.

Every original letter projects to an admissible signature. The finite
domains enforce the one-step pair-distance inequalities, the existence
of attaining letters, compatible row exits, and necessary strong
connectivity. Where $H-h=2$, the same original letter must also satisfy
the safe-row condition for its square $f^2$.
These constraints hold for every automaton satisfying the proposed
long-compression hypothesis, so excluding all compatible finite domains
proves the corresponding upper bound.

The upper-bound exclusions use the short-pair cover number
to split the finite search. Table~\ref{tab:coverage} summarizes the complete branch coverage.

\begin{table}[!htbp]
\centering\small
\begin{tabular}{@{}llp{.30\linewidth}p{.28\linewidth}@{}}
\toprule
Hypothesis & Branch & Finite scope & Exclusion\\
\midrule
$m=4$, $c(S)\ge12$ & $\tau(P_7)=4$ & 91 cores & finite checks\\[3pt]
$m=4$, $c(S)\ge12$ & $\tau(P_7)=3$ & 10,142 named domains & finite checks\\[3pt]
$m=5$, $c(S)\ge15$ & $\tau(P_9)=5$ & 2,294 cores & finite checks\\[3pt]
$m=5$, $c(S)\ge15$ & $\tau(P_9)=4$ & $3{,}595+3{,}206{,}988$\newline
$=3{,}210{,}583$ domains & finite checks plus Lemma~\ref{lem:multipartite}
for the lone historical relaxation\\
\bottomrule
\end{tabular}
\caption{Coverage for $|S|=n-m$, with $n\ge11$ for $m=4$ and $n\ge16$
for $m=5$. The last row separates eleven-point early support from early
support five to ten; the latter count is after 38,398 static exclusions.
Each row is a necessary-domain contradiction, not
an enumeration of all automata. The resulting bounds are $c(S)\le11$
and $c(S)\le14$, respectively.}
\label{tab:coverage}
\end{table}

\subsection{Defect four: excluding twelve steps}\label{sec:m4}
Assume $n\ge11$, $|S|=n-4$, and $c(S)\ge12$.
The known codimension-two bound $C_2\le6$ and
codimension-three bound $C_3\le9$ imply $\tau(P_7)\ge3$ and
$\tau(P_9)\ge4$, whereas \eqref{eq:cover} gives $\tau(P_7)\le4$.
Here $C_j$ denotes the maximum compression distance over all $(n-j)$-sets
in the same automaton. There are exactly two branches.
\begin{enumerate}
\item $\tau(P_7)=4$.
The actual images of depth at most four are maximum independent sets
of $P_7$ and cover $Q$ by Lemma~\ref{lem:short-images}.
The \proofsection{problem4-twelve-seventh-layer}{seven-layer reduction}
bounds the nonisolated core size by $5\le N\le8$, its maximum degree by
four, and its edge count by $7\le e\le10$, with maximum-family row rank
at least five. The complete classification has 91 isomorphism classes.
The complete letter and distance checks leave 152 joint distance vectors,
all excluded by the necessary exit from a proper core.
\item $\tau(P_7)=3$.
Then $\tau(P_9)=4$. The
\proofsection{problem4-twelve-finite-domains}{named-state reduction}
gives 117 early graph classes with $5\le N\le9$ and at most ten named
states in total. The complete later family has three to seven rows,
row rank at least three, and union omitting at most two states.
The closure-compatible families, modulo core automorphisms and consistent
isolated-state renaming, give exactly 10,142 necessary domains.
Every domain is excluded by the common distance, quotient connectivity,
and, where needed, same-letter square conditions.
\end{enumerate}
The proper-core conditions apply because $n\ge11$ exceeds the named support.
The six disjoint computational ranges contain
$1{,}414+259+978+2{,}349+2{,}500+2{,}642=10{,}142$
domains, with no survivors.
Both complete letter-generation orders and both common-distance
implementations were compared in these ranges.
The two branches prove $c(S)\le11$; the full implication chain is in
\proofsection{problem4-four-exact-eleven}{the codimension-four theorem}.
For $n\le10$, $q\le9$ suffices. For all $n$, matching binary witnesses
come from Proposition~\ref{prop:smallq} at $q=5,\ldots,9$ and
Theorem~\ref{thm:two-block} at $q\ge10$.

\subsection{Defect five: excluding fifteen steps}\label{sec:m5}
Assume $n\ge16$, $|S|=n-5$, and $c(S)\ge15$.
Since $C_3\le9$, \eqref{eq:cover} gives $\tau(P_9)\in\{4,5\}$.
The \proofsection{problem4-five-fifteen-domains}{nine-layer reduction}
gives the following exhaustive split.
\begin{enumerate}
\item $\tau(P_9)=5$.
The actual images of depth at most five are maximum independent sets
and cover $Q$. The core has $6\le N\le10$, maximum degree at most five,
at least nine edges, and maximum-family row rank at least six.
Exactly 2,294 classes survive the complete graph classification.
The necessary original-letter defect is at most three. Indeed, the
\proofsection{problem4-general-rank-descent}{rank-descent bound (RD1)},
with $m=5$, word length $a=1$, and defect $d=4$, gives
$c(S)\le2^{5-4}(1+4+2)-1=13<15$; larger defects give still shorter bounds.
The row-exit threshold is six and fifth powers remain safe.
The joint distance calculation excludes all 2,294 classes.
Its 72 disjoint ranges are $[32j,32j+31]$ for $0\le j\le70$,
and $[2272,2293]$, so $71\cdot32+22=2294$.
\item $\tau(P_9)=4$.
Take $\mathcal D$ to be the complete maximum-independent-set family
of $P_{11}$. Here $C_4\le11$ excludes an independent $(n-4)$-set,
while $S$ is independent, so $\alpha(P_{11})=n-5$.
The family has row rank at least four and union omitting at most
two states. The named support is at most eleven.
The complete early-core classification has 4,658 classes:
90 with eleven nonisolated states and 4,568 with five to ten.
Each later row is either an early maximum set with one point deleted,
or a nonextendable near-maximum set; each early maximum set contributes
at most two later rows. The finite closure enumeration retains the
identities of all omitted early isolated states.
\end{enumerate}

At eleven-point early support the
\proofsection{problem4-five-fifteen-eleven-plateau}{auxiliary-source lemma}
forces $\tau(P_{10})=4$ and exactly two early maximum sets $R,T$,
differing in one state. Exactly one is independent in $P_{10}$.
That same choice $R_0\in\{R,T\}$ must be preserved into $\{R,T\}$ by
\emph{every} original letter; allowing a separate choice for each letter
would discard an essential constraint.
The \proofsection{problem4-five-fifteen-eleven-excluded}{completed coverage}
of the 3,595 domains is
\[
 \underbrace{883+(1201-19)}_{\text{two omitted states: }2065}
 +\underbrace{1504+26}_{\text{one omitted state: }1530}
 =3595.
\]
The $19$-domain overlap is counted once. The 1,504 single-omission
domains and the final 26-domain component were checked, including
both common $R_0$ choices in the latter.

For early support five to ten, two cores are first excluded by the
\proofsection{problem4-proper-clique-layer-bound}{reservoir-to-core
entrance bound (ENT)}. The remaining closure enumeration and checks give
\[
 \underbrace{3{,}245{,}386}_{\text{raw domains}}
 -\underbrace{38{,}398}_{\text{static exclusions}}
 =\underbrace{3{,}206{,}988}_{\text{completed dynamic checks}}.
\]
These are the counts of the completed historical calculation, whose
joint predicate did not yet require a nonedge to enter the proper early
pair layer. Under that predicate, exactly 3,206,987 dynamic domains were
unsatisfiable. The sole satisfying \emph{relaxation} was core 447, domain 0:
the ninth-layer core is $K_{4,4}$ with parts $\{0,1,2,3\}$ and
$\{4,5,6,7\}$, and the four later rows on that core are
\[
 \{0,1,2\},\quad\{0,1,3\},\quad
 \{4,5,6\},\quad\{4,5,7\}.
\]
All rows also contain the common isolated states.
The current joint predicate additionally imposes the necessary
\proofsection{problem4-proper-pair-layer-entrance}{proper-layer entrance
condition} and rejects this domain too. Thus a run of the current default
is not expected to reproduce the historical survivor. Section~\ref{sec:computation}
distinguishes the two predicates and gives a narrow reproduction path.

The following paper argument excludes this remaining domain.
\begin{lemma}[Multipartite overlap exclusion]\label{lem:multipartite}
Let $h\ge1$ and $t\ge2$. Suppose a noncomplete $P_h$ in a synchronizing automaton is a complete multipartite graph on $V=V_1\sqcup\cdots\sqcup V_t$ and has isolated remaining states $J$. Assume each $V_i$ admits a family $\mathcal R_i$ of nonempty subsets whose union is $V_i$, whose intersection graph is connected, and for which every $R\cup J$ with $R\in\mathcal R_i$ is independent in $P_{h+1}$. Then this configuration is impossible.
\end{lemma}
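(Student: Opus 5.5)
The plan is to show that the configuration produces a family of sets that every letter maps into the same family, and that this contradicts synchronization. The sets are the blocks $B_k=V_k\cup J$ for $1\le k\le t$. These are exactly the maximal independent sets of $P_h$: it is complete multipartite on $V$ and $J$ is isolated, so an independent set of $P_h$ meets at most one part $V_k$. I will show that every original letter $f$ maps each $B_i$ into some $B_k$. Then I will derive a contradiction from noncompleteness and synchronization.

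\emph{Step 1 (one-letter images of the given rows).} Fix a letter $f$ and some $R\in\mathcal R_i$. Suppose two distinct states $u,v\in R\cup J$ had $uf=vf$, or had $\{uf,vf\}$ an edge of $P_h$. Then $\{u,v\}$ would merge within $h+1$ letters, contradicting independence of $R\cup J$ in $P_{h+1}$. Hence $f$ is injective on $R\cup J$, and $(R\cup J)f$ is independent in $P_h$. So $(R\cup J)f\subseteq B_{k(R)}$ for some index $k(R)$, which is uniquely determined whenever the image meets $V$.

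\emph{Step 2 (a common target block for $V_i$).} I split on the behaviour of $J$.
\begin{itemize}
\item If $Jf$ meets some part $V_k$, then every $(R\cup J)f$ with $R\in\mathcal R_i$ contains that point. Hence all of them lie in the same $B_k$.
\item Otherwise $Jf\subseteq J$. Since $J$ is finite and $f$ is injective on $J$, this forces $Jf=J$. Injectivity on $R\cup J$ then forbids any point of $R$ from mapping into $J$, so $Rf\subseteq V_{k(R)}$, and $k(R)$ is well defined. If $R,R'\in\mathcal R_i$ intersect, a common point $x$ has $xf\in V_{k(R)}\cap V_{k(R')}$, so $k(R)=k(R')$. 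Connectivity of the intersection graph of $\mathcal R_i$ makes $k$ constant on $\mathcal R_i$.
\end{itemize}
In both cases, since $\bigcup\mathcal R_i=V_i$, we obtain $B_if=(V_i\cup J)f\subseteq B_k$ for a single $k$. This also covers $J=\varnothing$. This step is where the connectivity and overlap hypotheses are essential, and it is the main point to get right. The delicate part is ruling out images landing in the reservoir, which is handled by the finiteness and injectivity argument.

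\emph{Step 3 (contradiction).} Because $P_h$ is not complete, there is a nonedge $\{u,v\}$. Every nonedge lies inside some block $B_i$: either it joins two states of one part, or it involves a state of $J$ together with a state of $J$ or of some $V_i$. Among all pairs contained in a common block, choose one, $\{u,v\}$, whose merging distance $\ell$ is minimal; this is finite by synchronization. A pair inside a block is a nonedge of $P_h$, so $\ell>h\ge1$. Write the shortest merging word as $fw$. By Step 2, $\{uf,vf\}$ again lies inside a block. It is a genuine pair, since $\ell>1$, and it merges in $\ell-1$ letters. This contradicts the minimality of $\ell$, so the configuration is impossible.
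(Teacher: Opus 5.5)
Your proof is correct and follows essentially the same route as the paper. The same split on whether $Jf\subseteq J$ (forcing $Jf=J$) or $Jf$ meets a part, with overlapping rows pinning $V_if$ to a single part, shows that every letter maps each block $V_i\cup J$ into a block, i.e.\ sends $P_h$-nonedges to nonedges. Your minimal-merging-distance argument is a repackaging of the paper's observation that, along a shortest merging word, some letter must take a nonedge at distance $h+1$ to an edge at distance $h$.
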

\begin{proof}
A $P_h$ nonedge cannot be merged in one letter. Fix a letter $f$. It is therefore injective on each part and on $J$, and $Jf$ is disjoint from $Vf$. Since every $(R\cup J)f$ is $P_h$-independent, if $Jf\subseteq J$, then $Jf=J$ and $Rf$ is contained in some single part. Intersecting $R,R'$ have intersecting images, so the connected intersection graph forces all of $V_if$ into one part. Every original nonedge remains a nonedge.

Otherwise $Jf$ meets some part $V_j$. Independence of $(R\cup J)f$ forces $Rf\subseteq V_j\cup J$ for every $R$, hence $Vf\subseteq V_j\cup J$ and $Jf\subseteq V_j\cup J$. Again every original nonedge remains a nonedge. This holds for every letter. But a synchronizing automaton with noncomplete $P_h$ must have some letter sending a nonedge to an edge of $P_h$: follow the shortest merging word for a pair at distance greater than $h$ until the distance first drops from $h+1$ to $h$. Contradiction.
\end{proof}
The two pairs of intersecting triples above separately cover the two parts of $K_{4,4}$, and their rows (with common isolates adjoined) are independent in $P_{11}$, hence in $P_{10}$. Lemma~\ref{lem:multipartite} with $h=9$ excludes the only relaxation survivor. Thus the combined complete finite exclusions establish $c(S)\le14$. For $n\le15$, the original bound $q\le14$ suffices. Matching binary witnesses are given by Theorem~\ref{thm:two-block} for $q\ge12$ and Proposition~\ref{prop:smallq}: the first and second cardinality faces cover $q=6,7$, and the cofinal construction covers $q=8,\ldots,11$.

\section{Computational basis and reproducibility}\label{sec:computation}
The upper proof consists of the short-pair reductions, the complete
finite-domain checks, and the overlap exclusion in
Lemma~\ref{lem:multipartite}. Definitions and generation procedures
for the finite domains are in the version-pinned supplement
\texttt{PROOF.md}~\cite{repo}. The implementation revision~\cite{code}
adds an explicit historical-predicate option without changing the
stronger default. The principal implementation sources are:

\begin{center}\small
\begin{tabular}{@{}p{.27\linewidth}p{.66\linewidth}@{}}
\toprule
Component & Principal source files under \texttt{scripts/problem4/}\\
\midrule
Defect four, cover four &
\source{compression_fourteen_cores.cpp}\\
Defect four, cover three &
\source{low_cover_core_domains.cpp};
\source{low_cover_actions.cpp}\\
Defect five, cover five &
\source{rank_core_graph6.cpp};
\source{near_max_rank_cores.cpp};
\source{near_max_rank_actions.cpp}\\
Defect five, cover four &
\source{fifteen_low_cover_domains.cpp}\\
Joint necessary constraints &
\source{pair_sat_control.hpp}\\
Binary construction controls &
\source{check_two_block.py};
\source{two_block_potential.py}\\
\bottomrule
\end{tabular}
\end{center}

The largest historical finite calculation covered 3,206,988 necessary domains
using 24 workers, paired letter-generation orders, and CaDiCaL 3.0.1
with joint SAT constraints (22,923.697 seconds). Of these,
3,206,987 were unsatisfiable under the checked relaxation.
The remaining model is eliminated by Lemma~\ref{lem:multipartite}.
The historical joint predicate omitted the proper-layer entrance condition.
The current implementation includes it by default; the optional
\nolinkurl{--historical-no-pair-entrance} flag, used with
\nolinkurl{--distances} and \nolinkurl{--joint-sat}, restores the earlier joint predicate.
For core 447, domain 0, both modes produce 1,768 letter signatures:
the historical mode gives \texttt{joint=1}, whereas the current default
gives \texttt{joint=0}. These are existence indicators for necessary
relaxations, not counts of automata. The stronger condition does not
invalidate any earlier UNSAT decision, and the historical survivor is
independently excluded by Lemma~\ref{lem:multipartite}.

For the cover-five branch, the graph6 driver is
\texttt{rank\_core\_graph6.cpp}, with parameters
\texttt{5 6 9} \nolinkurl{--actions} \texttt{START COUNT THREADS}
\nolinkurl{--sat-distances}.
Its input is the complete union of the nauty/Traces \texttt{geng} 2.9.3
streams for orders $6,\ldots,10$, each generated with
\texttt{-d1 -D5}; no connectedness restriction is imposed.
The ranges stated in Section~\ref{sec:m5} cover all 2,294 retained cores.

Exact commands, including the single-domain comparison of the two
predicates, are given in \texttt{docs/RUNNING.md} in the source revision
linked above. This comparison is not a rerun of the 3,206,988-domain
calculation. The historical evidence comprises the reported completed
enumeration records, the stated domain-reduction arguments, program
implementations and SAT decisions; per-domain UNSAT proof traces
were not retained. Independent finite controls compare generation orders,
small-domain enumerations, and actual automata.

\section{A conjectured extension}\label{sec:conj}
The binary lower-bound constructions give
$\Max_{\rm bin}(n,n-m,r)\ge\min(q,3m-1)$ for every $m\ge4$,
and Theorem~\ref{thm:main} establishes equality for $m=4,5$.
\begin{conjecture}[Linear defect formula]\label{conj:linear}
For every $m\ge4$ and all admissible $n,r$,
\[
 \Max_{\rm bin}(n,n-m,r)=\Max(n,n-m,r)=\min(n-r,3m-1).
\]
\end{conjecture}
The remaining parameter region for exact avoid-or-compress extrema
is $m\ge6$ and $q\ge3m$.

\appendix
\section{Constructions for the smaller values of \texorpdfstring{$q$}{q}}\label{sec:appendix}
\begin{theorem}\label{thm:sharp-faces}
The following bounds hold, and every lower bound has a binary witness.
\begin{enumerate}
\item $\Max(n,r+1,r)=n-r$ for $n\ge2$ and $1\le r\le n-1$.
\item $\Max(n,r+2,r)=n-r$ for $n\ge4$ and $1\le r\le n-3$.
\item For $r\ge1$, $s-r\ge3$ and $n\ge2s-r-1$, one has
\[
 \Max(n,s,r)=n-r.
\]
\end{enumerate}
For $1\le r<n$, one has $\Max(n,n,r)=1$.
\end{theorem}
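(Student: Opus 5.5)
Every face of Theorem~\ref{thm:sharp-faces} asserts $\Max=n-r=q$ (for $s<n$), so the plan has two halves. First I prove the universal upper bound $\ac(S,A)\le q$ promised after \eqref{eq:universal}. Then I give one binary witness per face attaining $q$. The degenerate case $s=n$ is treated separately.

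For the upper bound I use the avoid-or-compress potential argument. Suppose no word of length $<k$ avoids or compresses. Then for every word $w$ with $|w|<k$, $Sw$ has size $s$ and contains $A$. Put $U_i=\bigcup_{|w|\le i}A w^{-1}$, the set of states whose image under some word of length at most $i$ lies in $A$. I would instead argue with the forward sets directly. Let $W_i=\bigcap_{|w|\le i}Sw$. Then $A\subseteq W_i$ while $i<k$, and $|W_0|=s$.

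The key step is to show $W_{i+1}\subsetneq W_i$ as long as $W_i\ne A$ and no compression has occurred. I expect this to be the main obstacle. I would try a linear-algebra (Kari-style) version instead of a combinatorial one. Take a synchronizing word $u$ with $Qu=\{z\}$ and consider the vectors $\chi_S D_w$. Before compression these are 0/1 vectors of weight $s$. The condition ``$A$ not avoided'' says each dominates $\chi_A$.

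Strong connectivity plus synchronization gives the standard fact that the span of $\{\chi_S D_w\}$, restricted to the affine subspace, grows in dimension at each length until it reaches dimension $n-r$ relative to the fixed coordinates of $A$. Concretely, consider the differences $\chi_S D_w-\chi_S$. They lie in the subspace orthogonal to $\one$ and vanish on $A$ as long as $A\subseteq Sw$ and $|Sw|=s$, because then the $A$-coordinates equal $1$. That subspace has dimension $q-1$. The standard stabilization argument says that if the span does not grow from length $i$ to $i+1$, it is invariant under all $D_\ell$. An invariant subspace containing a nonzero vector orthogonal to $\one$ contradicts synchronization, since $D_u$ maps every vector $x$ to $(x\cdot\one)e_z$ and kills that subspace. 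The dimension therefore increases by one per length, giving at most $q-1$ increments. Hence by length $q$ some $Sw$ either compresses or loses an $A$-coordinate, which is $\ac(S,A)\le q$. For $s=n$, every letter of a synchronizing automaton that is not a permutation compresses $Q$ immediately, and one such letter exists; hence $\ac=1$, and any automaton with a nonpermutation letter attains it.

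For the lower bounds I would build on one template per face. Face~1 ($s=r+1$): take a cycle letter $a$ on $Q=\{0,\dots,n-1\}$ and a letter $b$ equal to $a$ except that it merges $n-1$ and $0$. Let $S$ be an interval of length $r+1$ ending just before the merge point, and $A$ its first $r$ points positioned so that both avoidance and compression require traversing the $n-r$ states outside $A$. I would check with an interval-potential argument that each letter moves the extremal point by at most one. Face~2 uses a similar Černý-type gadget with one extra protected-free state. Face~3 is Proposition~\ref{prop:gaps}, which I take as the general gap construction referenced in Proposition~\ref{prop:smallq}, with $g=s-r-1$ and $M=n-r-1$. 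Its hypothesis $M\ge 2g$ is exactly $n\ge 2s-r-1$. Reachability and strong connectivity follow by the same cycle/permutation routing used in the proof of Theorem~\ref{thm:two-block}.
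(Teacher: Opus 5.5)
\paragraph{Upper bound.} Your linear-algebra setup is Lemma~\ref{lem:upper} in difference form, and your dimension count is right: vectors orthogonal to $\one$ and vanishing on $A$ form a space of dimension $q-1$. The step that forces strict growth, however, is wrong as stated. An invariant subspace of $\one^{\perp}$ is perfectly compatible with synchronization. $D_u$ mapping it to $0$ does not contradict invariance, and $\one^{\perp}$ itself is invariant under every $D_\ell$, since each $D_\ell$ is row-stochastic. The contradiction has to come from what the stabilized span contains. If $W_i=W_{i+1}$ with $i<q$, invariance shows that $W_i$ contains $\chi_SD_w-\chi_S$ for \emph{every} word $w$. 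In particular it contains $\chi_SD_u-\chi_S=s\,e_z-\chi_S$ for a reset word $u$, whose coordinates on $A$ are nonzero because $s\ge2$. Under the safety hypothesis, though, $W_i$ vanishes on $A$. (The paper uses instead a word $z$ with $A\nsubseteq Sz$.) With this repair, your upper bound and your $s=n$ case go through.

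\paragraph{Faces 1 and 2.} The genuine gap is in the witnesses for these two faces, and it cannot be patched by a better potential argument. No automaton with a full-cycle letter $a$ can work. Then $Sa^k=S+k$, and $\bigcap_{j=0}^{k}(S+j)$ consists of the points of $S$ preceded by $k$ consecutive points of $S$. Since $s<n$, every maximal cyclic run of $S$ loses its first points, so this intersection has at most $s-k$ elements. If $a,\dots,a^{k}$ were all safe, $A$ would lie in this intersection. Hence some $a^j$ with $j\le s-r+1$ already omits a point of $A$. This gives $\ac(S,A)\le s-r+1$, that is, at most $2$ on Face~1 and at most $3$ on Face~2. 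That is strictly below $q=n-r$ whenever $n\ge s+2$. Your specific choice of $A$ as the first $r$ points of the interval is even avoided by $a$ after one letter. For Face~2 you give no construction at all.

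\paragraph{The missing idea.} One letter should fix $A$ pointwise while carrying the free states along a long path. In the paper's Face~1 witness, $a$ fixes $A=\{L,\ldots,n-1\}$ and moves $0\to1\to\cdots\to L$, so compression occurs exactly at $a^L$ with $L=q$. Meanwhile $b$ cycles $S=\{0\}\cup A$ and sends every outside state to $L$, so $b$ returns every $S_j=A\cup\{j\}$ to $S_0$. Face~2 follows the same principle: $a$ fixes $A$ and moves the two extra points of $S$ through a disjoint set $U$ of size $q$.

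\paragraph{Face 3.} Your reduction to Proposition~\ref{prop:gaps} with $g=s-r-1$ and $M=n-r-1$ is correct and coincides with the paper's.
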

Thus the first two cardinality faces are determined, including their full-image boundary. For every fixed $(s,r)$ with $s-r\ge3$, the bound is sharp at every $n\ge2s-r-1$, in particular at $n=r+k(s-r-1)+2$ for $k\ge2$.

We recall the upper bound with a proof, to make the argument self-contained. For a word $w$ let $D_w(q,t)=1$ if $qw=t$, and $0$ otherwise. Then $D_{uv}=D_uD_v$. The characteristic vector $\chi_S$ is a row vector; $\chi_SD_w$ counts trajectories from $S$, with multiplicity, and its support is $Sw$.
\begin{lemma}[Szyku\l a~\cite{szykula2018,szykula2026}]\label{lem:upper}
If $A\nsubseteq Sz$ for some word $z$, then $\ac(S,A)\le n-|A|$.
\end{lemma}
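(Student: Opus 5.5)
We will prove Lemma~\ref{lem:upper} by the linear-algebraic rank argument behind the Frankl--Pin style bounds. Set $r=|A|$ and work in the space of row vectors indexed by $Q$. For each word $w$ put
\[
 \phi(w)=\chi_S D_w \chi_A^{\top}\quad\text{and more generally}\quad \psi_w=D_w\chi_A^{\top},
\]
so $\psi_w(p)=1$ exactly when $pw\in A$. We may assume that $S$ is not compressed by any word of length less than $n-r$; otherwise $\ac(S,A)\le c(S)<n-r$ and we are done. So assume $|Sw|=s$ for all $|w|\le n-r-1$. For such $w$, the vector $\chi_SD_w$ is the characteristic vector of $Sw$ with all multiplicities one. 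Hence $A\subseteq Sw$ holds exactly when the row vector $\chi_{Sw}$ dominates $\chi_A$.

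The key device is to test column vectors against $\chi_S$. Consider the subspace
\[
 V_k=\operatorname{span}\{\chi_{A}^{\top}-\tfrac{r}{s}\,\ldots\}
\]
More concretely, we use the standard trick: for each word $u$, $\chi_S D_u\mathbf 1^{\top}=s$ always, and $\chi_SD_u\chi_A^{\top}=r$ precisely when every state of $A$ has a preimage in $Su$ (for uncompressed $u$). Define column vectors $\psi_u=D_u\chi_A^{\top}$, the characteristic vectors of the preimage sets $Au^{-1}$. We consider the affine condition
\[
 \chi_S\,\psi_u=r,
\]
which holds for every $u$ with $A\subseteq Su$. It fails for the given $z$, since $A\nsubseteq Sz$ forces $\chi_S\psi_z<r$ (each $a\in A$ contributes at most one trajectory when there is no compression).

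Now define $W_k=\operatorname{span}\{\psi_u:|u|\le k\}$, which lies inside an $n$-dimensional space. The vector $\mathbf 1^{\top}$ need not lie in it. Instead, we use the functional $\lambda(x)=\chi_Sx-\tfrac{r}{n}\mathbf 1x$, or more simply the functional $\chi_S$ restricted to the affine hull. The standard argument runs as follows. If all $\psi_u$ with $|u|\le k$ satisfy $\chi_S\psi_u=r$, and $W_{k+1}=W_k$, then $W_k$ is invariant under every $D_a$, so it contains $\psi_z$. Writing $\psi_z=\sum c_u\psi_u$ gives a contradiction once we also control $\sum c_u$. To control that sum, apply a second functional that is constant on all $\psi_u$: the total count $\mathbf 1_{\text{weighted}}$, namely $\pi\psi_u$ for a suitable invariant vector. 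This is the main obstacle. A clean alternative is to include $\mathbf 1^{\top}$, which is fixed by every $D_a$, by setting $W_k=\operatorname{span}(\{\mathbf 1^{\top}\}\cup\{\psi_u:|u|\le k\})$ and replacing $\chi_S$ by the functional $x\mapsto s\,\chi_Sx- r\cdot\ldots$.

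We intend the following cleaner formulation. Let $W_k=\operatorname{span}\{\psi_u-\tfrac{r}{n}\mathbf 1^{\top}\ldots\}$. In practice we take $W_k=\operatorname{span}\{\psi_u:|u|\le k\}$ inside the quotient by $\mathbb R\mathbf 1^{\top}$ only if needed. The dimension count is as follows. $W_0$ is spanned by $\chi_A^{\top}$, which has dimension $1$. The chain strictly increases until it stabilizes, and stabilization puts every $\psi_u$ in $W_k$. The ambient constraint is that all $\psi_u$ lie in the affine space of vectors whose coordinate sum over a stationary distribution is fixed. Where that fails, we use the codimension coming from the $r$ fixed coordinates $\chi_A$ restricted to $A$.

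Counting the available dimension gives $n-r$ strict steps beyond the first, hence some $|u|\le n-r$ with $\chi_S\psi_u\ne r$. That $u$ either avoids $A$ or compresses $S$ at length at most $n-r$. The step we expect to be delicate is choosing the right linear functional so that ``$\chi_S\psi_u=r$ for all short $u$'' genuinely propagates to all words. Our first attempt will be the functional $\chi_S$ together with the observation that $\psi_u\ge0$ has exactly $r$ ones on each column-sum level, so a violation by $z$ transfers to a violation within the stabilized span.
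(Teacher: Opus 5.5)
Your proposal does not yet contain a proof. The step you flag as ``the main obstacle'' is where the whole argument has to happen, and it is never resolved. There are two concrete failures.

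First, the scalar $\chi_S\psi_u=|\{p\in S:pu\in A\}|$ merges the $r$ separate conditions $q\in Su$ $(q\in A)$ into one count. It detects avoidance only for words that do not compress $S$. The given $z$ is arbitrary, so $\chi_S\psi_z<r$ does not follow: if $S=\{1,2,3\}$, $A=\{1,2\}$ and $z$ sends $1,2\mapsto1$ and fixes $3$, then $A\nsubseteq Sz$ but $\chi_S\psi_z=2=r$.

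Second, and more seriously, there is no dimension count. The only constraint you place on the columns $\psi_u$ is the single affine condition $\chi_S\psi_u=r$. Turning it into a linear constraint needs a functional that is constant on every $\psi_u$. The invariant vector you suggest does not exist in a strongly connected synchronizing automaton: if $\pi D_a=\pi$ for every letter and $Qw=\{q_0\}$, then $\pi=\pi D_w=(\pi\mathbf 1^{\top})e_{q_0}$. So either $\pi=0$ or $q_0$ is fixed by every letter, which is impossible for $n\ge2$. Even granting such a functional, one constraint confines $W_k$ only to a hyperplane and yields the bound $n-1$, not $n-r$. The ``$r$ fixed coordinates'' you invoke at the end are coordinates of the row vectors $\chi_SD_w$; the columns $\psi_u$ have no prescribed entries.

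The missing idea is to work with those row vectors throughout, as the paper does. You already observed that $D_a\mathbf 1^{\top}=\mathbf 1^{\top}$. On the row side this says $\chi_SD_w\mathbf 1^{\top}=s$ for \emph{every} word, which is exactly the normalizing functional you were looking for. Put $V_i=\operatorname{span}\{\chi_SD_w:|w|\le i\}$. If all words of length at most $n-r$ are safe, each of their vectors has total $s$ and entry $1$ at every $q\in A$. Hence $V_{n-r}$ lies in $H=\{x:x_q=s^{-1}\sum_t x_t\text{ for every }q\in A\}$. These $r$ equations are independent because $A\ne Q$, so $\dim H=n-r$. Since $\dim V_0=1$, and $V_i=V_{i+1}$ makes $V_i$ invariant under every $D_a$, the chain stabilizes at some $i<n-r$. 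That puts $\chi_SD_z$ in $H$. But $\chi_SD_z$ has total $s$ and a zero entry in $A$, so it is not in $H$. This contradiction holds whether or not $z$ compresses $S$, so it also removes your first difficulty.
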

\begin{proof}
Write $r=|A|$, $s=|S|$, and let $V_i$ be the span over $\mathbb R$ of $\chi_SD_w$ for $|w|\le i$. Suppose all words of length at most $n-r$ are safe. Safety gives $|Sw|=|S|$, so $S\to Sw$ is injective: no trajectories collide, and every image state has multiplicity $1$. Their rows therefore have total $s$ and value $1$ at every coordinate of $A$. Hence $V_{n-r}$ lies in
\[
 H=\{x:x_q=s^{-1}\textstyle\sum_{t\in Q}x_t\text{ for every }q\in A\}.
\]
The $r$ equations are independent since $r<s$, so $\dim H=n-r$. Now $\dim V_0=1$, and $V_i=V_{i+1}$ implies invariance under every letter and stabilization forever. Therefore some equality must occur before index $n-r$, or $\dim V_{n-r}\ge n-r+1$. Stabilization puts $\chi_SD_z$ in $H$, contradicting its zero coordinate in $A$ and total $s$.
\end{proof}
In a strongly connected synchronizing automaton with $n\ge2$, a reset word followed by a path to a state outside $A$ gives the required $z$. This proves every upper bound $n-r$ in Theorem~\ref{thm:sharp-faces}. If $S=Q$, some letter must be nonpermutational and immediately compresses $Q$, while the empty word is safe. Thus $\ac(Q,A)=1$.

We will also use the following elementary fact. If $t$ cycles $x_0,x_1,\ldots,x_{s-1}$ and $e$ sends $x_0$ to $x_1$ while fixing the other points, then
\begin{equation}\label{eq:unit}
 e(t^{s-1}e)^{s-2}
\end{equation}
maps this set to $x_1$. Indeed, after $e$ the image is $\{x_1,\ldots,x_{s-1}\}$; each $t^{s-1}e$ deletes the last point of the remaining interval. This also covers $s=2$.

\Needspace{8\baselineskip}
\subsection{The first two cardinality faces}
\begin{proposition}\label{prop:one}
Part~(1) of Theorem~\ref{thm:sharp-faces} holds.
\end{proposition}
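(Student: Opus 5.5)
The upper bound $\Max(n,r+1,r)\le n-r$ is already available. After a reset word, a path to a state outside $A$ gives a word $z$ with $A\nsubseteq Sz$, so Lemma~\ref{lem:upper} applies. The whole task is therefore the lower bound. For every $n\ge2$ and $1\le r\le n-1$ I need a binary strongly connected synchronizing automaton, a reachable $(r+1)$-set $S$, and an $r$-set $A\subsetneq S$ with $\ac(S,A)\ge n-r$. Writing $q=n-r$, I would try states $A=\{p_0,\ldots,p_{r-1}\}$ together with a chain $c_1,\ldots,c_q$, and take $S=A\cup\{c_1\}$. The guiding idea is a single free token: only one state of $S$ is not protected. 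That token must travel the whole chain, about $q-1$ steps, before any letter can either push a protected state out of the image or collide two trajectories. So $a$ would advance $c_i\mapsto c_{i+1}$, and the interaction with $A$ would happen only at the end of the chain. Letter $b$ would fix most states, and its unique nontrivial kernel would involve $c_q$ and a protected state, in the spirit of the single-kernel letter $b$ of Figure~\ref{fig:witness}. The simplest candidate I would test first is the following.
\begin{itemize}
\item $a$ cycles the chain $c_1\to\cdots\to c_q$ and then enters $A$ through $p_0$, with $A$ threaded by a cycle.
\item $b$ is the identity except for one merge of $c_q$ into $p_0$, together with a shift that lets the token leave $p_0$.
\end{itemize}
The parameters would then be tuned by small computer checks before committing.

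The lower-bound proof would not inspect the attaining words. Instead I would reuse the incidence-matrix technique of the avoidance part of Theorem~\ref{thm:two-block}. I would choose cells, namely singleton cells on $A$ and suitable color cells on the chain, together with permutations $T_a,T_b$ of cells, such that
\[
D_\ell C=CT_\ell+(\text{rank-one correction supported at the merge state}).
\]
Because $S$ meets every cell exactly once, iterating along an original word $w$ gives $\chi_SD_wC=\mathbf1+(\text{corrections})$. Each correction can be born only after the token reaches $c_q$, which takes at least $q-1$ letters, and it needs at least one further letter to hit $A$. Until then every protected coordinate keeps multiplicity one. This rules out avoidance before length $q$. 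Compression is handled the same way: a pair merges only through the kernel of $b$, and that kernel is unreachable from $S$ in fewer than $q$ steps. A distance potential on pairs, as in the proof of Theorem~\ref{thm:two-block}, makes this rigorous.

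The remaining steps are routine.
\begin{itemize}
\item Give an explicit attaining word of length exactly $q$, namely $a^{q-1}$ followed by the merging letter.
\item Prove strong connectivity from the cycle structure.
\item Prove synchronization by the interval-shrinking word~\eqref{eq:unit} applied to a cycle $t$ and the idempotent-like merge $e$.
\item Give a reaching word for $S$, for example a reset word followed by a word that spreads a singleton back out. More simply, choose $S$ as the image $Qu$ of a suitable word $u$ by making $b$ of rank $r+1$ on a designated subset.
\end{itemize}
The degenerate cases $q=1$ (where $S=Q$) and $r=n-1$ need separate, trivial witnesses.

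I expect the main obstacle to be choosing the letter actions so that reachability of an $(r+1)$-image and the all-word lower bound hold simultaneously. Making $S$ an actual image requires some letter of low rank. That letter tends to create short merges from $S$ or to displace protected states early, which would break the $q$ bound. I would first try to make $b$ merge only inside the chain. I would then verify by brute force, for $n\le8$, that $\ac(S,A)=n-r$ before writing the cell argument in general.
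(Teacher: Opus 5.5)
\textbf{There is a genuine gap: the lower bound, which is the entire content of this proposition, is never supplied.} Your upper bound is exactly the paper's. For the lower bound, however, you never commit to an automaton. The ``simplest candidate'' is left to be tuned by computer, and its obvious instantiations fail:
\begin{itemize}
\item If $A$ is threaded into the same $a$-cycle as the chain, then $Sa=\{p_1,\ldots,p_{r-1},c_1,c_2\}$ already omits $p_0$, so $\ac(S,A)=1$.
\item If $b$ lets the token leave $p_0$ via $p_0b=c_1$ while fixing $c_1$, then $b$ compresses $S$ in one letter.
\end{itemize}
Your first reachability suggestion, a reset word followed by a word that ``spreads a singleton back out,'' is impossible, since $|Xw|\le|X|$ for every word. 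Your alternative, a letter whose image is exactly $S$, is the right idea. But you then name its conflict with safety as the main obstacle and leave it unresolved, and that conflict is the whole difficulty here. The cell and potential machinery cannot compensate, because there are no letter actions to apply it to.

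\textbf{The missing idea} is to let the low-rank letter act as a permutation on $S$, and to merge only states that a safe image can never contain two of. In the paper:
\begin{itemize}
\item $Q=\{0,\ldots,n-1\}$, $L=n-r$, $S=\{0,L,\ldots,n-1\}$ and $A=S\setminus\{0\}$.
\item Letter $a$ sends $j\mapsto j+1$ for $j<L$ and fixes all other states.
\item Letter $b$ cycles $S$ in the displayed order and sends every state outside $S$ to $L$.
\end{itemize}
Thus $Qb=S$, and the kernel of $b$ is the chain $\{0,\ldots,L-1\}\mapsto L$. This is exactly your instinct to ``make $b$ merge only inside the chain.''

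Safety then needs only a short invariant, not a cell argument. With $S_j=A\cup\{j\}$ one has $S_jb=S_0$, $S_ja=S_{j+1}$ for $j<L-1$, and $S_{L-1}a=A$. Hence every word shorter than $L$ is safe, and $a^L$ compresses $S$. Synchronization follows from \eqref{eq:unit} with $t=b$ and $e=a^L$: the word $a^L(b^ra^L)^{r-1}$ resets $S$, and hence $Q$, because $Qa^L=Sa^L=A$.

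This construction also handles $L=1$ uniformly. Your two ``degenerate cases'' $q=1$ and $r=n-1$ are in fact the same case.
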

\begin{proof}
Put $L=n-r\ge1$ and $Q=\{0,\ldots,n-1\}$. Define
\[
 qa=\begin{cases}q+1&q<L,\\q&q\ge L,\end{cases}
 \qquad S=\{0,L,L+1,\ldots,n-1\},\qquad A=S\setminus\{0\}.
\]
Let $b$ cycle $S$ in the displayed order and send every point outside $S$ to $L$. Then $Qb=S$. The $a$-path from $0$ to $L$ and the $b$-cycle prove strong connectivity.

For $0\le j<L$, set $S_j=A\cup\{j\}$. One has $S_jb=S_0$, and $S_ja=S_{j+1}$ for $j<L-1$; $S_{L-1}a=A$. Induction shows that every word of length less than $L$ has image $S_j$ with $j<L$, so is safe. The word $a^L$ compresses $S$, proving $\ac(S,A)=L$.

On $S$, $c=a^L$ is the unit $0\mapsto L$. Formula~\eqref{eq:unit} with $t=b$ shows that $c(b^rc)^{r-1}$ resets $S$. It resets $Q$ as well, since $Qc=A=Sc$. This proves synchronization, also when $L=1$ or $r=1$.
\end{proof}

\begin{proposition}\label{prop:two}
Part~(2) of Theorem~\ref{thm:sharp-faces} holds.
\end{proposition}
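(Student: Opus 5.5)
The upper bound $\Max(n,r+2,r)\le n-r$ is already supplied by Lemma~\ref{lem:upper} together with the remark after it: in a strongly connected synchronizing automaton a reset word followed by a path to a state outside $A$ gives the required word $z$. So the whole task is a binary lower-bound witness. I would give a strongly connected synchronizing automaton on $n\ge4$ states with a reachable image $S$, $|S|=r+2$, and $A\subset S$, $|A|=r$, such that $\ac(S,A)=L:=n-r\ge3$. The template is Proposition~\ref{prop:one}. The set $A$ is a block $B=\{L,\dots,n-1\}$ of $r$ states, kept stable by one letter and cycled with auxiliary points by the other. The two unprotected points of $S$ are two tokens moving through a "delay region" $\{0,\dots,L-1\}$.

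\emph{Construction.} I would take $S=\{0,1\}\cup B$ and $A=B$. Letter $a$ acts on $\{0,\dots,L-1\}$ as the cyclic shift $q\mapsto q+1 \pmod L$ and fixes $B$. Because $a$ is a permutation on the delay region, the two tokens never merge under $a$ alone. Letter $b$ does the following:
\begin{itemize}
\item it cycles $\{L-1\}\cup B$, say $L-1\mapsto L\mapsto L+1\mapsto\cdots\mapsto n-1\mapsto L-1$;
\item it fixes $0$;
\item it sends every point of $\{1,\dots,L-2\}$ to $0$.
\end{itemize}
This makes $\{0,\dots,L-2\}$ the only kernel region. I expect to tweak these choices (for example, which points $b$ collapses and where the $b$-cycle enters $B$) so that the following invariant holds.

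\emph{Invariant.} Every word of length $<L$ keeps the image of the form $S_{i,j}$: $A$ or a rotation of $A\cup\{L-1\}$ inside the $b$-cycle, together with two distinct delay points, no two tokens in the collapsing region simultaneously, and a potential $\phi(i,j)$ that drops by at most one per letter. I would define $\phi$ as the cyclic distance needed to bring both tokens into the kernel of $b$, or to push a token onto the $b$-cycle. The check is $\phi(E)\le 1+\phi(Ef)$ for $f\in\{a,b\}$, exactly as in the potential argument of Theorem~\ref{thm:two-block} but with a two-token table. The initial value $\phi(S)\ge L$ then gives, by telescoping, that no word of length $<L$ compresses $S$. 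A separate check shows that $A$ stays covered, since $b$ only rotates $B\cup\{L-1\}$ while a token sits at $L-1$.

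\emph{Attainment and structure.} I would exhibit an explicit word of length $L$, such as $a^{L-2}b$ preceded by the appropriate shift, that either merges the two tokens or pushes a token out so that $B$ loses a point. Strong connectivity follows from the $a$-cycle on the delay region, the $b$-cycle through $L-1$ and $B$, and $b$ mapping the delay region into $0$. For reachability I would use the collapse under $b$: $Qb^{k}$ lands in $\{0\}\cup$ (the $b$-cycle), and a suitable power of $a$ then reproduces $S$. Synchronization would use the unit trick~\eqref{eq:unit}: some word of length $\le L$ acts on the current image as $x_0\mapsto x_1$ with the other points fixed, and the $b$-cycle plays the role of $t$.

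\emph{Main obstacle.} I expect the delicate step to be the all-word safety argument, i.e.\ choosing $b$ so that the two unprotected tokens can neither collide nor drive a protected point out of the image in fewer than $L$ steps, while still allowing both. The separation $j-i=1$ in $S$ is tight. If the first naive choice compresses at $L-1$, I would try placing the second token at distance $L-1$ around the $a$-cycle (that is, $S=\{0,L-1\}\cup B$ with $b$ fixing $L-1$) and recompute the potential table.
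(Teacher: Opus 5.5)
Your upper bound via Lemma~\ref{lem:upper} is the paper's argument. However, the whole content of the proposition is the binary witness, and the one you specify does not work. Since $L\ge3$, the token $1$ lies in $\{1,\dots,L-2\}$, so $0b=1b=0$ and the single letter $b$ compresses $S$. It also avoids $L\in A$, because $Bb=(B\setminus\{L\})\cup\{L-1\}$ and no token is sent to $L$. Thus $\ac(S,A)=1$. Your remark that $A$ stays covered because a token sits at $L-1$ whenever $b$ acts is false from the start, since $L-1\notin S$.

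Your fallback $S=\{0,L-1\}\cup B$, with $b$ fixing $L-1$, compresses at length two: $ab$ sends the tokens to $\{1,0\}$ and then both to $0$. This is not a matter of tuning or of finding the right potential table. As long as $b$ collapses all of $\{0,\dots,L-2\}$ to one state, any two distinct tokens on the $L$-cycle are merged by one of $b$, $ab$, $a^2b$. So the design caps the delay at three, whatever invariant you write down.

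The missing idea is a kernel of $b$ that sends every adjacent pair of tokens back into $S$ and merges only the wrap-around pair. The paper proceeds as follows.
\begin{itemize}
\item It takes $S=A\cup\{u_1,u_2\}$, lets $a$ fix $A$ and cycle $U=\{u_1,\dots,u_M\}$, and lets $b$ shift $p_0\to\cdots\to p_{r-1}\to u_1$ while sending odd-indexed $u_j$ to $p_0$ and even-indexed ones to $u_2$.
\item For $0\le j\le M-2$ the moving points $u_{j+1},u_{j+2}$ of $Sa^j$ have opposite parity, so $Sa^jb=S$. At $j=M-1$ with $M$ odd, the pair $\{u_M,u_1\}$ has equal parity and merges.
\item For even $M$, the cycle is replaced by the lasso $u_1\to u_2\to\cdots\to u_M\to u_2$, whose cycle has odd length $M-1$, and the two parity roles are swapped.
\item Every word shorter than $M$ then has image $Sa^k\supseteq A$, so safety is immediate and no potential function is needed.
\item $Qb=S$ gives reachability.
\item $a^{M-1}b$ acts on $S$ as the unit $u_1\mapsto u_2$ followed by a cyclic permutation of $S$ (namely $ab$ or $b$). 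This is what makes~\eqref{eq:unit} applicable for synchronization.
\end{itemize}
Without an explicit mechanism of this kind, your potential, reachability, connectivity and synchronization steps have nothing concrete to verify.
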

\begin{proof}
Put $M=n-r\ge3$. Take disjoint sets $A=\{p_0,\ldots,p_{r-1}\}$ and $U=\{u_1,\ldots,u_M\}$, and let $S=A\cup\{u_1,u_2\}$. Letter $a$ fixes $A$, while
\[
 p_ib=p_{i+1}\quad(0\le i<r-1),\qquad p_{r-1}b=u_1.
\]
For odd $M$, let $a$ cycle $u_1,\ldots,u_M$ and set
\[
 u_jb=\begin{cases}p_0&j\text{ odd},\\u_2&j\text{ even}.\end{cases}
\]
For even $M$ (necessarily $M\ge4$), set
\[
 u_1a=u_2,\quad u_ja=u_{j+1}\ (2\le j<M),\quad u_Ma=u_2,
 \qquad
 u_jb=\begin{cases}u_2&j=1\text{ or }j\ge3\text{ odd},\\p_0&j\text{ even}.\end{cases}
\]
In both cases $Qb=S$. From $p_0$ the $b$-path reaches $u_1$, whose $a$-orbit covers $U$; some state of $U$ returns to $p_0$ under $b$. Every state therefore reaches and is reached from $p_0$.

The two moving points of $Sa^j$ remain distinct for all $j\ge0$. For $0\le j\le M-2$, their $b$-images are $p_0$ and $u_2$, so $Sa^jb=S$. At $j=M-1$, the pair is $\{u_M,u_1\}$ for odd $M$ and $\{u_M,u_2\}$ for even $M$; both points map to $p_0$. Thus $h=a^{M-1}b$ compresses $S$. Every word shorter than $M$ splits into $a$-runs followed by $b$, each run having length at most $M-2$, and a trailing $a$-run. All these transitions are safe. Hence $\ac(S,A)=M$.

Set $s=r+2$, and $t=ab$ for odd $M$, $t=b$ for even $M$. On $S$, $t$ cycles $p_0,\ldots,p_{r-1},u_1,u_2$. If $e$ is the unit $u_1\mapsto u_2$ on $S$, then $h=et$ there. Consequently
\[
 h(t^{s-2}h)^{s-2}=e(t^{s-1}e)^{s-2}t
\]
on $S$, and this is a reset by~\eqref{eq:unit}. Prepending $b$ gives a reset of $Q$. The unit $e$ is used only to verify the identity; the reset word uses the original letters.
\end{proof}

\subsection{Every larger fixed gap}
\begin{proposition}\label{prop:gaps}
For $r\ge1$, $g\ge2$ and $M\ge2g$, there is a binary strongly connected synchronizing automaton with
\[
 n=r+M+1,\quad |S|=s=r+g+1,\quad |A|=r,
 \qquad \ac(S,A)=M+1.
\]
The subset $S$ is an actual reachable image.
\end{proposition}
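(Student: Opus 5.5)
The upper bound $\ac(S,A)\le n-r=M+1$ is already given by Lemma~\ref{lem:upper}: a reset word followed by a path to a state outside $A$ supplies the word $z$ it needs. So the work is an explicit binary automaton in which no word of length at most $M$ avoids $A$ or compresses $S$, together with one attaining word of length $M+1$. I would generalize the pattern of Proposition~\ref{prop:two}, which is essentially the case $g=1$ without the extra state.

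\emph{Construction.} Take the protected block $A=\{p_0,\ldots,p_{r-1}\}$ and a cycle $U=\{u_0,u_1,\ldots,u_M\}$ of length $M+1$, so $n=r+M+1$. Let $S=A\cup\{u_1,\ldots,u_{g+1}\}$.
\begin{itemize}
\item Letter $a$ fixes $A$ and rotates $U$, so $S a^j$ carries a window of $g+1$ consecutive cycle points.
\item Letter $b$ shifts $p_i\mapsto p_{i+1}$ and sends $p_{r-1}$ into the window base, so $b$ cycles $S$, as in Proposition~\ref{prop:two}.
\item On $U$, letter $b$ folds every point back onto the window $\{u_1,\ldots,u_{g+1}\}\cup\{p_0\}$. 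The only collision of $b$ is a single pair at distance greater than $g$ on the cycle; this is where $M\ge 2g$ is needed.
\item The unique entrance into $A$ is from one designated cycle point (say $u_0$) to $p_0$.
\end{itemize}
I would tune the folding so that $Qb=S$. That gives reachability at once, and strong connectivity follows from the $a$-cycle together with the $b$-path through $A$.

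\emph{Safety.} I would argue with a potential or incidence invariant, in the style of the proof of Theorem~\ref{thm:two-block}.
\begin{itemize}
\item \emph{No compression.} Define a pair potential on the window configuration that decreases by at most one per letter. It should equal $M+1$ on every pair of $S$ and equal $1$ on the kernel pair of $b$. Telescoping then gives $c(S)\ge M+1$.
\item \emph{No avoidance.} Use singleton cells on $A$ and cells on the cycle, and write $D_\ell C=CT_\ell+(\text{rank-one correction})$ as before. The correction is nonzero only once a trajectory reaches the designated entrance point. A deficit can reach a protected coordinate only after traveling the rest of the cycle and then down $A$. The aim is to show that this takes at least $M+1$ letters, so every word of length at most $M$ keeps multiplicity one on all of $A$.
\end{itemize}
I expect this avoidance count to be the main obstacle. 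The $b$-folding must be arranged so that no shortcut from the window into $A$ beats the cycle-length bound. I would first try making $b$ send each cycle point at most one step forward (or fix it), so that a distance function on $U\cup A$ is $1$-Lipschitz under both letters.

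\emph{Attainment and synchronization.} The word $a^{M-g}b$, or a variant of it, should bring the kernel pair of $b$ into the window, or push a hole to $p_0$, at total length exactly $M+1$. For synchronization I would use the unit trick~\eqref{eq:unit}: find a word $t$ cycling $S$ and a word $h=et$ on $S$ with $e$ a unit. Then $h(t^{s-2}h)^{s-2}$ resets $S$, and prepending $b$ resets $Q$.
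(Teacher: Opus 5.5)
\textbf{Genuine gap: the construction is never specified, and the shape you prescribe is infeasible.} Your reduction of the upper bound to Lemma~\ref{lem:upper} is fine. The content of the proposition, however, is the construction, and the constraints you place on $b$ cannot all be met.

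Two of them already contradict each other. First, $Qb=S$ means $|Qb|=n-(M-g)$ with $M-g\ge g\ge2$, so $b$ cannot have a single kernel pair. Second, if $b$ permutes $S$ with $p_ib=p_{i+1}$ and $p_{r-1}b$ in the window, then $p_0$ has a $b$-preimage in the window. So $u_0\notin S$ cannot be the unique entrance to $A$; if it were, $Sb$ would already omit $p_0$.

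More seriously, the shape fails for every $g\ge2$. With $a$ rotating $U=\{u_0,\ldots,u_M\}$ and fixing $A$, you get $Sa^j=A\cup W_j$, where $W_j=\{u_{j+1},\ldots,u_{j+g+1}\}$ with indices mod $M+1$. Assume $b$ permutes $S$ and $Qb=S$.
\begin{itemize}
\item Safety of the words $a^jb$ for $0\le j\le M-1$ (lengths at most $M$) forces $b$ to map each such $W_j$ bijectively onto $S\setminus Ab$.
\item Comparing consecutive windows gives $u_ib=u_{i+g+1}b$ for $1\le i\le M-1$.
\item These relations are all but two edges of the shift by $g+1$ on $\mathbb Z_{M+1}$. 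So $b$ takes at most $\max\{2,\gcd(g+1,M+1)\}$ values on $U$, whereas $W_0$ needs $g+1\ge3$ distinct values.
\item Hence $(g+1)\mid(M+1)$. But then $b$ maps every window, $W_M$ included, onto $S\setminus Ab$. The family of sets $A\cup W_j$ is closed under both letters, so $S=Qb$ is never compressed and the automaton is not synchronizing.
\end{itemize}
For example, with $g=2$ and $M=4$ the relations force $u_1b=u_4b=u_3b$, although $u_1,u_3\in S$. So the pattern of Proposition~\ref{prop:two} does not generalize in the way you suggest.

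\textbf{How the paper's construction differs.} It departs from yours exactly at this point.
\begin{itemize}
\item Its $S=P\cup\{u_1,\ldots,u_g\}$ has only $g$ cycle points plus an unprotected buffer $p_r$, with $A=P\setminus\{p_r\}$ and $|U|=M$.
\item For $M\ge2g+1$, letter $a$ is a lollipop with $u_Ma=u_{h+1}$, where $h=1+((M-1)\bmod g)$. Its cycle then has length $M-h\equiv0\pmod g$, so the mod-$g$ color cells satisfy $D_aC=CT_a$ exactly.
\item The single defect is the $b$-transition $u_Mb=u_{h+g}$, whence $D_bC=CT_b+f_{u_M}(e_g-e_1)$.
\item A trajectory from $S$ needs $M-g$ letters to reach $u_M$. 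The deficit created by the next $b$ then needs $g$ further letters to reach $A$, the last two being $b$ through $p_r$.
\item Together with a first-collision argument based on the multiplicity bound for color $g-1$, this gives safety through length $M$.
\item The word $a^{M-g}b^{g+1}$ attains $M+1$; your $a^{M-g}b$ has length only $M-g+1$.
\item Reachability is $Qb^{g+1}=S$, and $b^{g+1}c(bc)^{s-2}$ with $c=a^{M-h}b^s$ resets $Q$. The boundary case $M=2g$ needs a separate full-cycle variant.
\end{itemize}
Your toolkit of rank-one corrections, pair potentials and the unit trick is the right one. It has to be applied to a concrete, feasible $b$ before any of the safety, attainment or synchronization claims can be checked.
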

\begin{proof}
Take disjoint sets $P=\{p_0,\ldots,p_r\}$ and $U=\{u_1,\ldots,u_M\}$.
Set $A=P\setminus\{p_r\}$ and $S=P\cup\{u_1,\ldots,u_g\}$.
First suppose $M\ge2g+1$ and put $h=1+((M-1)\bmod g)$.
Then $1\le h\le g$ and $M\ge2g+h$.
Let $a$ fix $P$, send $u_j$ to $u_{j+1}$ for $j<M$, and send $u_M$ to $u_{h+1}$.
Define $p_0b=u_h$, $p_ib=p_{i-1}$ for $1\le i\le r$, and
\[
 u_jb=\begin{cases}
 p_r&j\equiv h-1\pmod g,\\
 u_1&j=g,\ h\ge2,\\
 u_{h+g}&j=M,\\
 u_{j+1}&\text{otherwise}.
 \end{cases}
\]
The cases are disjoint. On $S$, $b$ cycles
$p_0,u_h,\ldots,u_g,u_1,\ldots,u_{h-1},p_r,\ldots,p_1$.
The segment $u_1,\ldots,u_{h-1}$ is empty for $h=1$.
Ordinary outside states follow increasing indices to the next index
congruent to $h-1$, then enter $p_r$.
The exceptional path of length $g+1$ is
\[
u_M\longrightarrow u_{h+g}\longrightarrow\cdots
\longrightarrow u_{h+2g-1}\longrightarrow p_r.
\]
Its indices exist because $M\ge2g+h$.
Thus $Qb^{g+1}=S$. The $a$-path from $u_1$ covers $U$, and the $b$-cycle covers $P$,
proving strong connectivity.

\emph{Safety.}
Use singleton $P$-cells and color cells $1,\ldots,g$, assigning
$u_j$ color $1+((j-h-1)\bmod g)$.
Let $C$ be the $n\times s$ state-to-cell incidence matrix.
Let $T_a$ fix $P$ and cycle the colors, and let $T_b$ cycle
$p_0,g,1,\ldots,g-1,p_r,\ldots,p_1$.
For a standard state column $f_q$ and cell row $e_c$, one has
\begin{equation}\label{eq:error}
 D_aC=CT_a,\qquad D_bC=CT_b+f_{u_M}(e_g-e_1).
\end{equation}
The last term is an outer product of an $n\times1$ column and a $1\times s$ row.
These cells are bookkeeping devices, not a quotient automaton.
For $w=\ell_1\cdots\ell_t$, iteration gives
\begin{equation}\label{eq:expand}
 \chi_SD_wC=\one_s+
 \sum_{\substack{1\le i\le t\\\ell_i=b}}
 \mu_i(e_g-e_1)T_{\ell_{i+1}\cdots\ell_t},
 \qquad \mu_i=(\chi_SD_{\ell_1\cdots\ell_{i-1}})_{u_M}\ge0.
\end{equation}
This counts trajectories even after collisions, with products in word order.
A trajectory from $S$ needs at least $M-g$ letters to reach $u_M$:
initial indices are at most $g$, every transition within $U$ increases an
index by at most one, and entrances from $P$ have index $h\le g$.
Hence a nonzero correction has $i\ge M-g+1$.
In the cell graph, color $g$ needs at least $g-1$ letters to reach color $g-1$,
and color $1$ needs at least $g$ letters to reach $A$.
The latter route traverses $1,\ldots,g-1$ and then takes two letters $b$
through $p_r$. A detour through $P$ cannot shorten either route.

For $t\le M-1$, correction suffixes have length at most $g-2$,
so no positive correction reaches color $g-1$, whose trajectory count is at most one.
Suppose a word of length at most $M$ has a first collision.
For a colliding $b$, the distinct source states must lie in its only
non-singleton fiber, all of color $g-1$, contradicting this count bound.
For a colliding $a$, the sources must be $u_h,u_M$, its only non-singleton fiber.
They are not both initially present. A preceding $b$ cannot give $u_M$,
which has no $b$-preimage. A preceding $a$ requires sources
$u_{h-1},u_{M-1}$ one step earlier, both of color $g-1$, again a contradiction;
for $h=1$, $u_h$ has no $a$-preimage at all.
Thus every word of length at most $M$ preserves $s$ states.
For $t\le M$, every correction suffix has length at most $g-1$,
so no negative correction reaches $A$. Its singleton cells retain multiplicity
at least one. Every such word is safe.

The word $a^{M-g}b^{g+1}$ has length $M+1$.
Exactly its first $b$ has a trajectory at $u_M$, with multiplicity one.
Its negative correction is $e_1T_b^g=e_{p_{r-1}}$, whereas
its positive correction is $e_gT_b^g=e_{p_r}$.
It therefore omits $p_{r-1}$ and attains equality, also when $r=1$.

For synchronization put $c=a^{M-h}b^s$.
The $a$-run fixes the cyclic tail $u_{h+1},\ldots,u_M$.
For $j<h$, it sends $u_j$ to $u_{M-h+j}$, which has the same remaining
$b$-distance $h-j$ to $p_r$ as $u_j$.
For $j=h$, it sends $u_h$ to $u_M$, which reaches $p_0$ in $g+1+r=s$ letters $b$.
Consequently $c$ fixes $S\setminus\{u_h\}$ and sends $u_h$ to its $b$-predecessor $p_0$.
After $c$, each $bc$ extends the missing interval by one point in forward
$b$-order. Thus $b^{g+1}c(bc)^{s-2}$ resets $Q$.

\emph{The boundary $M=2g$.}
Let $a$ fix $P$ and cycle $U$.
Define $p_0b=u_g$, $p_ib=p_{i-1}$ and
\[
 u_jb=\begin{cases}
 p_r&j\equiv g-1\pmod g,\\
 u_{j+1}&j<M,\ j\not\equiv g-1\pmod g,\\
 u_M&j=M.
 \end{cases}
\]
Now $b$ cycles $B=P\cup\{u_g,\ldots,u_{2g-1}\}$ and fixes $u_{2g}$;
$Qb^{g+1}=E=B\cup\{u_{2g}\}$.
Put $d=a^{g-1}b^s$. It fixes $P$ and on $E$ has actions
\[
 u_gd=u_{2g-1},\quad u_{g+1}d=u_{2g},\quad
 u_{g+j}d=u_{g+j-1}\ (2\le j\le g-1),\quad u_{2g}d=u_{2g-1}.
\]
Therefore $Eda^g=S$, proving actual reachability. The two cycles prove strong connectivity.
Assigning $u_j$ color $1+((j-1)\bmod g)$ gives exactly
\eqref{eq:error}--\eqref{eq:expand} again.
The same delay and collision arguments apply, with $a$ now a permutation.
The word $a^gb^{g+1}$ attains avoidance: only the first negative correction
has a suffix of $g$ letters $b$ and reaches $p_{r-1}$; no positive correction reaches $A$.
Finally $e=d^g$ fixes $E\setminus\{u_g\}$ and sends $u_g$ to $u_{2g}$.
Each $be$ removes one further point of $B$ while retaining $u_{2g}$.
Hence $b^{g+1}e(be)^{s-1}$ resets $Q$.
\end{proof}
Taking $g=s-r-1$ and $M=n-r-1$ proves part~(3) of Theorem~\ref{thm:sharp-faces}.

\section*{Acknowledgments}
We thank Marek Szyku{\l}a for correspondence concerning Open Problem~4
and the theory of synchronizing automata.

Large language models, including OpenAI's ChatGPT, assisted with mathematical
exploration, code development and checking, and manuscript drafting and editing.
The authors directed and reviewed these contributions and take responsibility
for the results presented here.

\bibliographystyle{amsplain}
\bibliography{references}
\end{document}